\theoremstyle{plain}
\newtheorem{theorem}{Theorem}
\newtheorem{lemma}{Lemma}
\theoremstyle{definition}
\newtheorem{definition}{Definition}
\newenvironment{breakablealgorithm}
  {
   \begin{center}
     \refstepcounter{algorithm}
     \hrule height.8pt depth0pt \kern2pt
     \renewcommand{\caption}[2][\relax]{
       {\raggedright\textbf{\ALG@name~\thealgorithm} ##2\par}%
       \ifx\relax##1\relax 
         \addcontentsline{loa}{algorithm}{\protect\numberline{\thealgorithm}##2}%
       \else 
         \addcontentsline{loa}{algorithm}{\protect\numberline{\thealgorithm}##1}%
       \fi
       \kern2pt\hrule\kern2pt
     }
  }{
     \kern2pt\hrule\relax
   \end{center}
  }
\journal{}
\date{}
\begin{document}
\begin{frontmatter}
\title{Distributed Exact Generalized Grover's Algorithm}

\author[mymainaddress,myfourthaddress]{Xu Zhou\corref{mycorrespondingauthor}}
\ead{zhoux359@mail.sysu.edu.cn}
\cortext[mycorrespondingauthor]{Corresponding author}

\author[mysecondaryaddress]{Xusheng Xu}
\ead{thuxuxs@163.com}

\author[mythirdaddress]{Shenggen Zheng}
\ead{zhengshenggen@quantumsc.cn}


\author[mymainaddress,mythirdaddress,myfourthaddress]{Le Luo}
\ead{luole5@mail.sysu.edu.cn}

\address[mymainaddress]{School of Physics and Astronomy, Sun Yat-sen University, Zhuhai 519082, China}


\address[mysecondaryaddress]{Department of Physics, State Key Laboratory of Low-Dimensional Quantum Physics, Tsinghua University, Beijing 100084, China}

\address[mythirdaddress]{Quantum Science Center of Guangdong-Hong Kong-Macao Greater Bay Area (Guangdong), Shenzhen 518045, China}

%

\address[myfourthaddress]{QUDOOR Co, Ltd., Beijing 100089, China}

\begin{abstract}
Distributed quantum computation has garnered immense attention in the noisy intermediate-scale quantum (NISQ) era, where each computational node necessitates fewer qubits and quantum gates. In this paper, we focus on a generalized search problem involving multiple targets within an unordered database and propose a Distributed Exact Generalized Grover’s Algorithm (DEGGA) to address this challenge by decomposing it into arbitrary $t$ components, where $2 \leq t \leq n$.
Specifically, (1) our algorithm ensures accuracy, with a theoretical probability of identifying the target states at $100\%$; 
(2) if the number of targets is fixed, the pivotal factor influencing the circuit depth of DEGGA is the partitioning strategy, rather than the magnitude of $n$; 
(3) our method requires a total of $n$ qubits, eliminating the need for auxiliary qubits;
(4) we elucidate the resolutions (two-node and three-node) of a particular generalized search issue incorporating two goal strings (000000 and 111111) by applying DEGGA. The feasibility and effectiveness of our suggested approach is further demonstrated by executing the quantum circuits on MindSpore Quantum (a quantum simulation software). Eventually, through the decomposition of multi-qubit gates, DEGGA diminishes the utilization of quantum gates by $90.7\%$ and decreases the circuit depth by $91.3\%$ in comparison to the modified Grover's algorithm by Long. It is increasingly evident that distributed quantum algorithms offer augmented practicality. 
\end{abstract}

\begin{keyword}
Distributed quantum computation; Noisy intermediate-scale quantum (NISQ) era; Distributed Exact Generalized Grover's Algorithm (DEGGA); MindSpore Quantum; Quantum simulation software
\end{keyword}

\end{frontmatter}


\section{Introduction}\label{introduction}
Quantum computation is an emerging field of study that leverages the principles of quantum mechanics to develop a new paradigm for computation. The first ideas about quantum computing were proposed by Benioff \cite{RefBenioff1980} and  Feynman \cite{RefFeynman1982} in the 1980s. These early ideas were based on the concept of using quantum mechanical phenomena, such as superposition and entanglement, to perform computation. In 1985, Deutsch \cite{RefDeutsch1985} presented the idea of a universal quantum computer, which could perform any computation that a classical computer could. This idea laid the groundwork for the development of more advanced quantum algorithms. 

Subsequently, the development of quantum algorithms has flourished, with researchers exploring new algorithms for a wide range of tasks. Some of the most important quantum algorithms developed include Deutsch's algorithm \cite{RefDeutsch1985}, Deutsch-Jozsa (DJ) algorithm \cite{RefDeutsch1992}, Bernstein-Vazirani (BV) algorithm \cite{RefBernstein1993}, Simon's algorithm \cite{RefSimon1997}, Shor's algorithm \cite{RefShor1994}, Grover's algorithm \cite{RefGrover1997}, HHL algorithm \cite{RefHarrow2009} and so on. In recent years, quantum-classical hybrid algorithms such as variational quantum algorithm (VQA)  \cite{RefVQA2021}, variational quantum eigensolver (VQE) \cite{RefVQE2014}, and quantum approximate optimization algorithm (QAOA) \cite{RefQAOA2014} have garnered considerable attention. Quantum algorithms have demonstrated unparalleled advantages over classical counterparts for certain problems, owing to the distinctive characteristics of quantum superposition and entanglement.

Grover's algorithm, a groundbreaking quantum search technique, was first introduced by Grover \cite{RefGrover1997} in 1996. This innovative algorithm effectively addresses the search problem in an unordered database containing $N=2^n$ elements. Here we consider a Boolean function $f: \{0,1\}^n \rightarrow \{0,1\}$. In general, the $m$ targets search problem is also known as the generalized search problem, which aims to identify a target string $x^{(i)}\in\{0,1\}^n$ satisfying $f \left(x^{(i)}\right)=1$, where $i\in\{0, 1, \cdots, m-1\}$. By assuming the availability of an Oracle capable of recognizing inputs, it would return $f (x)=1$ if $x\in\{0,1\}^n$ is a target string and $f (x)=0$ otherwise. Classical algorithms demand $\mathcal{O} \left(N/m\right)$ queries to the Oracle, whereas Grover's algorithm achieves square acceleration by only requiring $\mathcal{O} \left(\sqrt{N/m}\right)$ queries, with high probability, to pinpoint the target strings.

Grover's algorithm is widely acknowledged as one of the most pivotal quantum algorithms. It has found extensive applications in various domains, including the minimum search problem \cite{RefChristoph1996}, string matching problem \cite{RefRamesh2003}, quantum dynamic programming \cite{RefAmbainis2019}, and computational geometry problem \cite{RefAndris2020}. Furthermore, an extension of Grover's method, known as the quantum amplitude amplification and estimation algorithms \cite{RefBrassard2002}, has been proposed.

However, the accuracy of Grover's algorithm in searching for the target state is limited, as it has been rigorously mathematically demonstrated \cite{Refiao2010} that it can only achieve precise results when $1/4$ of the data in the database satisfy the search conditions. In an effort to achieve precision in search, three distinct methodologies \cite{RefBrassard2002, RefHöyer2000, RefLong2001} were proposed around 2000, each of which is grounded in the extension of the original Grover's algorithm, albeit with varying conceptual frameworks. Specifically, in 2001, Long \cite{RefLong2001} advanced Grover's algorithm by extending the Grover operator $G$ to an operator $L$, thereby facilitating exact search. The modified version achieved a probability of $100 \%$ in retrieving target states.

In 2017, Preskill \cite{RefPreskill2018} declared that quantum computation is entering the noisy intermediate-scale quantum (NISQ) era. In the pursuit of realizing quantum computers, a multitude of physical systems are currently being investigated, encompassing ions \cite{RefCirac1995}, photons \cite{RefLu2007}, superconduction \cite{RefMakhlin2001}, and other quantum-based systems \cite{RefBerezovsky2008, RefHanson2008}. Currently, the construction of small-qubit quantum computers appears to be more facile compared to that of large-scale universal quantum computers. Consequently, researchers are contemplating the potential for multiple small-scale devices to collaborate and achieve a task on a grand scale. This embodies the essence of distributed quantum computing.

Distributed quantum computation, which merges distributed computing and quantum computation, has garnered immense attention in the current NISQ era. In order to efficiently tackle a colossal issue, it endeavors to decompose it into numerous subproblems that are distributed several quantum computers. Its characteristic is that each computing node requires fewer qubits and possesses a shallower quantum circuit.

This research field has witnessed a substantial amount of theoretical and experimental endeavors \cite{RefBuhrman2003, RefYimsiriwattana2004, RefBeals2013, RefLi2017}. 

Avron et al. \cite{RefAvron2021} proposed a distributed Oracle approach and illustrated its benefits in the experiment of implementing 3-qubit Grover's algorithm. 
Qiu et al. \cite{RefQiu2022} elucidated the serial and parallel distributed variants of Grover's algorithm. Although they successfully addressed the multi-target search conundrum, the required number of qubits is $2^k(n-k)$, with $2^k$ denoting the number of computational nodes. Zhou and Qiu et al.  \cite{RefZhou2023DEGA} devised a distributed exact Grover's algorithm (DEGA), yet their scheme is restricted to resolving scenarios involving a solitary target string.

Building upon Avron's foundational work, Tan, Xiao, and Qiu et al. \cite{RefTan2022} proposed a refined distributed Simon's algorithm with reduced query complexity. However, their methodology necessitates a greater number of qubits compared to the original circuits and demands supplementary classical queries. In terms of generalizability and exactness, Li and Qiu et al. \cite{RefLi2023Simon} provided a distributed quantum algorithm for Simon's problem. 

Zhou and Qiu et al. \cite{RefZhou2023DBVA} created a $t$-node distributed Bernstein-Vazirani algorithm. In the experimental verification, their technique is more straightforward than the original BV algorithm. For dealing with the DJ problem, Li and Qiu et al. \cite{RefLi2023} stated a multi-node distributed exact quantum algorithm. Recently, a distributed Shor's algorithm was suggested by Xiao and Qiu et al. \cite{RefXiao2022}, which enables separately predict patrial bits of $s/r$ for some $s\in\{0,1,\cdots,r-1\}$ by two quantum computers while resolving order-finding. They \cite{RefXiao2023} subsequently broadened their plan to include several nodes. 


In this paper, we focus on a generalized search problem involving multiple targets within an unordered database and propose a Distributed Exact Generalized Grover’s Algorithm (DEGGA) to address this challenge by decomposing it into arbitrary $t$ components, where $2 \leq t \leq n$. Specifically, (1) our algorithm ensures accuracy, with a theoretical probability of identifying the target states at $100\%$; (2) if the number of targets is fixed, the pivotal factor influencing the circuit depth of DEGGA is the partitioning strategy, rather than the magnitude of $n$; (3) our method requires a total of $n$ qubits, eliminating the need for auxiliary qubits.

MindSpore Quantum \cite{Refmq_2021}, a quantum simulation software, serves as a comprehensive software library facilitating the development of applications for quantum computation. Furthermore, we expound upon the resolution of a specific generalized search problem involving two goal strings (000000 and 111111) through the implementation of two-node and three-node DEGGA algorithms. 
The viability and effectiveness of our suggested methodology can be further substantiated by executing the quantum circuits on MindSpore Quantum. Eventually, through the decomposition of multi-qubit gates in quantum circuits, it becomes apparent that distributed quantum algorithms exhibit enhanced practicality in the NISQ era. 

The remaining sections of this paper are structured as follows. In Section \ref{preliminary}, we will provide a concise overview of both the original and modified Grover's algorithms. Subsequently, in Section \ref{DEGGA}, our primary focus will be on introducing the DEGGA. Next, we will delve into the in-depth analysis of our algorithm in Section \ref{analysis}. Following this, Section \ref{experiment} will elucidate specific instances of the DEGGA implementation on MindSpore Quantum. Eventually, we will give a succinct  summary in Section \ref{conclusion}.

\section{Preliminary}\label{preliminary}
In this section, we provide a concise overview of the original \cite{RefGrover1997} and modified \cite{RefLong2001} Grover's algorithms, focusing on their applications in generalized search problems.

\subsection{The original Grover's algorithm \cite{RefGrover1997}}\label{sectiongrover}
Let Boolean function $f: \{0,1\}^n \rightarrow \{0,1\}$. First of all, we define the following notation:
\begin{eqnarray}
A&=&\{x\in\{0 , 1\}^n | f(x)=1\},\\
B&=&\{x\in\{0 , 1\}^n | f(x)=0\} ,\\
a &=& |A|=\left| \{x\in\{0,1\}^n|f(x)=1\}\right| \ge 1,\\
b&=& |B|=\left| \{x\in\{0,1\}^n|f(x)=0\}\right|,\\
|A\rangle&=&\frac{1}{\sqrt{a}}\sum\limits_{x\in A}|x\rangle,\\
|B\rangle&=&\frac{1}{\sqrt{b}}\sum\limits_{x\in B}|x\rangle.
\end{eqnarray}
Next, we define a Boolean function for the generalized search problem as follows:
\begin{eqnarray}\label{groverdef}
f(x)=
\begin{cases}
1,x\in A,\\
0,x\in B ,
\end{cases}
\end{eqnarray}
where $x\in\{0,1\}^n$. 

We assume that there exists a black box (Oracle  $U_{f(x)}:\vert x\rangle\rightarrow (-1)^{f (x)}\vert x\rangle$, where $x\in\{0,1\}^n$) that recognises the inputs. The aim of Grover's algorithm is to use this black box to find out $x\in A$ with a high degree of probability.

Here, we briefly review Grover's algorithm in Algorithm \ref{Grover's Algorithm}. The whole quantum circuit is shown in Figure \ref{grovercircuit}.

\begin{breakablealgorithm}
\caption{Grover's Algorithm}
\label{Grover's Algorithm}
\begin{algorithmic}
\noindent \textbf{Input}: (1) The number of qubits $n$; (2) $N=2^n$; (3) A function $f(x)$, where $f(x)=0$ for $x\in\{0,1\}^n$ except $x\in A$, for which $f(x)=1$; (4) The number of target strings $a\ge 1$, which is a smaller number; (5) An Oracle $U_{f(x)}:\vert x\rangle\rightarrow (-1)^{f (x)}\vert x\rangle$, where $x\in\{0,1\}^n$.

\noindent \textbf{Output}: The target string $x\in A$ with great probability.

\noindent\textbf{Procedure:}

\noindent \textbf{Step 1.} Initialize $n$ qubits $\vert 0\rangle^{\otimes n}$, and get 
\begin{eqnarray}
|\psi_0\rangle = |0\rangle^{\otimes n}.
\end{eqnarray} 

\noindent \textbf{Step 2.} Apply $n$ Hadamard gates $H^{\otimes n}$, and get 
\begin{eqnarray}
|\psi_1\rangle = H^{\otimes n} |\psi_0\rangle =\frac{1}{\sqrt{N}}\sum\limits_{x\in\{0 , 1\}^n}|x\rangle.
\end{eqnarray}

\noindent \textbf{Step 3.} Execute Grover operator (denoted as $G$)  $\left\lfloor \frac{\pi}{4}\sqrt{\frac{N}{a}} \right\rfloor$ times, and get 
\begin{eqnarray}
|\psi_2\rangle = G^{\left\lfloor \frac{\pi}{4}\sqrt{\frac{N}{a}} \right\rfloor} |\psi_1\rangle,
\end{eqnarray}
where
\begin{eqnarray}
G&=&-H^{\otimes n} U_0 H^{\otimes n}U_{f(x)}\\
  &=& -\left(I^{\otimes n} - 2\vert \psi_1\rangle\langle \psi_1\vert \right) \left(I^{\otimes n} - 2\vert A\rangle\langle A\vert \right),
\end{eqnarray}
and
\begin{eqnarray}\label{U0equation}
U_0&=& I^{\otimes n}-2(\vert 0\rangle\langle0\vert )^{\otimes n} ,\\
U_{f(x)}&=& I^{\otimes n} - 2\vert A \rangle\langle A \vert.
\end{eqnarray}

\noindent \textbf{Step 4.} Measure each qubit in the basis $\{\vert 0\rangle, \vert 1\rangle\}$.
\end{algorithmic}
\end{breakablealgorithm}

\begin{figure}[H]
    \centering
\includegraphics[width=0.6\textwidth]{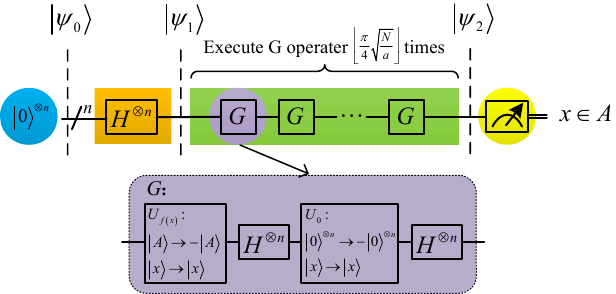}
\caption{Quantum circuit of Grover’s algorithm.}
\label{grovercircuit}
\end{figure}

We provide a briefly analyse of the Grover's algorithm below.

After Step 2 in Grover's algorithm, we have 
\begin{eqnarray}
| \psi_1 \rangle &=& H^{\otimes n} |\psi_0\rangle  = \frac{1}{\sqrt{N}}\sum\limits_{x\in\{0 , 1\}^n}|x\rangle \\
&\triangleq& \sqrt{\frac{a}{N}}|A\rangle+\sqrt{\frac{b}{N}}|B\rangle.
\end{eqnarray}

After $k$ iterations of $G$, we have 
\begin{eqnarray}
|\psi_2\rangle &=& G^{ k} |\psi_1\rangle \\
&= & \cos\left((2k+1)\theta\right)\vert B\rangle + \sin\left((2k+1)\theta\right)\vert A\rangle,
\end{eqnarray}
where 
\begin{eqnarray}
\sin \theta = \sqrt{\frac{a}{N}}, \cos \theta = \sqrt{\frac{b}{N}}.
\end{eqnarray}
To make
\begin{eqnarray}
\sin((2k+1)\theta) \approx 1,
\end{eqnarray}
which means 
\begin{eqnarray}
(2k+1)\theta \approx \frac{\pi}{2},
\end{eqnarray}
so that we can decide
\begin{eqnarray}
k \approx \frac{\pi/2-\theta}{2\theta}=\frac{\pi}{4\theta}-\frac{1}{2}.
\end{eqnarray}
It should be noted that $k$ must be a positive integer.

Due to the presence of  $a$ target items, then
\begin{eqnarray}
\theta = \arcsin \left(\sqrt{\frac{a}{N}}\right) \approx \sqrt{\frac{a}{N}} ~ (N \rightarrow + \infty ).
\end{eqnarray}
In other words, 
\begin{eqnarray}
k = \left\lfloor \frac{\pi}{4}\sqrt{\frac{N}{a}} \right\rfloor
\end{eqnarray}
is an appropriate selection for Grover's algorithm. The success probability is
\begin{eqnarray}
\sin^2 \left( \left(2\left\lfloor \frac{\pi}{4}\sqrt{\frac{N}{a}} \right\rfloor+1\right) \arcsin \left(\sqrt{\frac{a}{N}}\right) \right) \approx 1.
\end{eqnarray}

\subsection{The modified Grover's algorithm \cite{RefLong2001} by Long}\label{sectionlong}
In 2001, Long enhanced Grover's algorithm and introduced a modified version that guarantees the acquisition of the goal states with a probability of $100 \%$. The key concept involves the extension of the Grover operator $G$ to the operator $L$, which is achieved through a three-step adjustment process: (1) replace phase inversion with phase rotation; (2) make the phase rotation angles of the two reflections in $L$ are equal, which is referred to as the phase-matching condition; (3) iterate $J+1$ times for operator $L$, where $J=\lfloor(\pi/2-\theta)/(2\theta)\rfloor$ and $\theta= \arcsin \left({\sqrt{a/N}}\right)$. In fact, $J$ can be any positive integer equal to or greater than $\lfloor(\pi/2-\theta)/(2\theta)\rfloor$.

Here, we briefly review the modified Grover's algorithm by Long in Algorithm \ref{Long's Algorithm}. The whole quantum circuit is shown in Figure \ref{longcircuit}.

\begin{breakablealgorithm}
\caption{The modified Grover's algorithm by Long}
\label{Long's Algorithm}
\begin{algorithmic}
\noindent \textbf{Input}: (1) The number of qubits $n$; (2) $N=2^n$; (3) A function $f(x)$, where $f(x)=0$ for $x\in\{0,1\}^n$ except $x\in A$, for which $f(x)=1$; (4) The number of target strings $a\ge 1$, which is a smaller number; (5) An Oracle $R_{f(x)}:\vert x\rangle\rightarrow e^{i\phi \cdot f (x)}\vert x\rangle$, where $x\in\{0,1\}^n$, $\phi=2\arcsin\left(\sin\left(\frac{\pi}{4J+6}\right) / \sin \theta\right)$, $J=\lfloor(\pi/2-\theta)/(2\theta)\rfloor$, and $\theta= \arcsin \left({\sqrt{a/N}}\right)$.

\noindent\textbf{Output}: The target string $x\in A$ with a probability of $100 \%$.

\noindent \textbf{Procedure:}

\noindent \textbf{Step 1.} Initialize $n$ qubits $\vert 0\rangle^{\otimes n}$, and get 
\begin{eqnarray}
|\psi_0\rangle = |0\rangle^{\otimes n}.
\end{eqnarray} 

\noindent \textbf{Step 2.} Apply $n$ Hadamard gates $H^{\otimes n}$, and get 
\begin{eqnarray}
|\psi_1\rangle = H^{\otimes n} |\psi_0\rangle =\frac{1}{\sqrt{N}}\sum\limits_{x\in\{0 , 1\}^n}|x\rangle.
\end{eqnarray}

\noindent \textbf{Step 3.} Execute modified operator (denoted as $L$) $J+1$ times, and get 
\begin{eqnarray}
|\psi_2\rangle = L^{J+1} |\psi_1\rangle,
\end{eqnarray}
where
\begin{eqnarray}
L&=& -H^{\otimes n} R_0 H^{\otimes n}R_{f(x)}\\
  &=& -\left(I^{\otimes n} - \left(e^{i\phi}-1\right)\vert \psi_1\rangle\langle \psi_1\vert \right) \left(I^{\otimes n} + \left(e^{i\phi}-1\right)\vert A \rangle\langle A \vert  \right),
\end{eqnarray}
and
\begin{eqnarray}\label{U0equation}
R_0 &=&  I^{\otimes n} + \left(e^{i\phi}-1\right)(\vert 0\rangle\langle0\vert )^{\otimes n},\\
R_{f(x)} &=& I^{\otimes n} + \left(e^{i\phi}-1\right)\vert A \rangle\langle A \vert.
\end{eqnarray}

\noindent \textbf{Step 4.} Measure each qubit in the basis $\{\vert 0\rangle, \vert 1\rangle\}$.
\end{algorithmic}
\end{breakablealgorithm}

\begin{figure}[H]
    \centering
\includegraphics[width=0.6\textwidth]{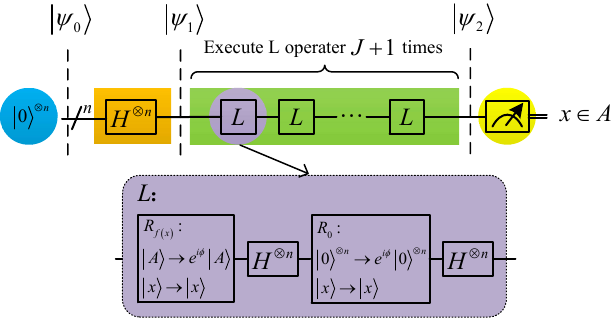}
\caption{Quantum circuit of the modified Grover's algorithm by Long.}
\label{longcircuit}
\end{figure}

We offer a concise analysis of the modified Grover's algorithm by Long.

In fact, $L$ has the following equivalent transformation
\begin{eqnarray}
L&=& -H^{\otimes n} R_0 H^{\otimes n}R_{f(x)}\\
  &=& -\left(I^{\otimes n} - \left(e^{i\phi}-1\right)\vert \psi_1\rangle\langle \psi_1\vert \right) \left(I^{\otimes n} + \left(e^{i\phi}-1\right)\vert A \rangle\langle A \vert  \right),\\
&=& -e^{i\phi} \left[ \cos\left( \frac{\alpha}{2} \right)I + i\sin\left( \frac{\alpha}{2} \right) (n_x X + n_y Y + n_z Z)  \right] \label{laxis},
\end{eqnarray}
where
\begin{eqnarray}\label{alpha}
\alpha = 4\beta, \sin \beta = \sin \left(\frac{\phi}{2}\right) \sin \theta, \theta= \arcsin \left({\sqrt{a/N}}\right),
\end{eqnarray}
and
\begin{eqnarray}\label{nxnynz}
n_x = \frac{\cos \theta}{\cos \beta} \cos \left(\frac{\phi}{2}\right), n_y = \frac{\cos \theta}{\cos \beta} \sin \left(\frac{\phi}{2}\right), n_z = \frac{\cos \theta}{\cos \beta} \cos \left(\frac{\phi}{2}\right) \tan \theta,
\end{eqnarray}
and $I$, $X$, $Y$, $Z$ represent the Pauli gates.

In the three-dimensional space spanned by $\{\vert A\rangle, \vert B\rangle\}$, one iteration of $L$ can be regarded as a rotation by an angle $\alpha$ around the axis $\vec{l}$ from the initial state $| \psi_1 \rangle$ to the target state $| \psi_2 \rangle$, and the overall rotation angle is denoted by $\omega$ (see Figure \ref{rotationL}).
where
\begin{eqnarray}\label{axis}
\vec{l} = \left[n_x, n_y, n_z \right]^T.
\end{eqnarray}
Besides, $\omega$ satisfies
\begin{eqnarray}\label{omega}
\cos \omega  = \cos \left( 2 \arccos \left( \sin \left(\frac{\phi}{2}\right) \sin \theta  \right) \right).
\end{eqnarray}

\begin{figure}[H]
\centering
\includegraphics[width=0.35\textwidth]{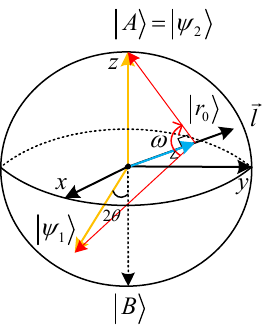}
\caption{Visualization of the modified Grover's algorithm by Long. }
\label{rotationL} 
\end{figure}

Furthermore, in order to be guaranteed that the desired state $| \psi_2 \rangle$ will be obtained with a theoretical success probability of $100\%$, angle $\omega$ should fulfill
\begin{eqnarray}\label{omega2}
\omega  =  (J+1)\alpha = 4 (J+1) \arcsin \left( \sin \left(\frac{\phi}{2}\right) \sin \theta \right)
\end{eqnarray}
after $J+1$ times iteration of $L$. Combining Eq.~\eqref{omega} and Eq.~\eqref{omega2}, we get
\begin{eqnarray}\label{condition}
\sin \left( \frac{\pi}{4J+6} \right) = \sin \left(\frac{\phi}{2}\right) \sin \theta.
\end{eqnarray}
In other words, we have
\begin{eqnarray}\label{phi}
\phi=2\arcsin\left(\sin\left(\frac{\pi}{4J+6}\right) / \sin \theta\right).
\end{eqnarray}

\section{Distributed Exact Generalized Grover's Algorithm}\label{DEGGA}
In this section, we primarily describe the Distributed Exact Generalized Grover's Algorithm (DEGGA) with arbitrary $t$ computing nodes, where $2 \leq t \leq n$. 

\subsection{Subfunctions}\label{subfunctions}
As a conventional approach to basis decomposition, a Boolean function $f: \{0,1\}^n \rightarrow \{0,1\}$ can be factored into two subfunctions:  $f_{\text{even}}(u)=f(u0)$ and $f_{\text{odd}}(u)=f(u1)$ \cite{RefAvron2021},  where $u\in\{0,1\}^{n-1}$. In general, if the $i$-th bit in $x\in\{0,1\}^{n}$ is selected, the following two subfunctions $f_{\text{even}/\text{odd}}: \{0,1\}^{n-1} \rightarrow \{0,1\}$ can be obtained,
\begin{eqnarray}
f_{\text{even}}(v)&=&f(v_0v_1\cdots v_{i-2}0v_{i}\cdots v_{n-2}v_{n-1}),  \label{subfunctions_pra1}\\
f_{\text{odd}}(v)&=&f(v_0v_1\cdots v_{i-2}1v_{i}\cdots v_{n-2}v_{n-1}), \label{subfunctions_pra2}
\end{eqnarray}
where $v=v_0v_1\cdots v_{i-2}v_{i}\cdots v_{n-2}v_{n-1}\in\{0,1\}^{n-1}$.

Without losing generality, if the last $k\in\{2,3,\cdots,n-1\}$ bits in $x\in\{0,1\}^{n}$ are picked to divide $f$, the following $2^k$ subfunctions $f_p: \{0,1\}^{n-k} \rightarrow \{0,1\}$ will be gained,
\begin{eqnarray}\label{subfunctions_fp}
f_p(w)&=&f(\underbrace{w_0w_1\cdots w_{n-k-1}}_{n-k} \underbrace{y_{p_{0}} y_{p_{1}}\cdots y_{p_{k-1}}}_{k}),
\end{eqnarray}
where $w=w_0w_1\cdots w_{n-k-1}\in\{0,1\}^{n-k}$ and $y_{p_{0}} y_{p_{1}}\cdots y_{p_{k-1}}\in\{0,1\}^{k}$ is the binary representation of $p\in\{0,1,\cdots,2^k-1\}$.

Prior to presenting the DEGGA, it is imperative that we disclose our partition strategy for the original function.

To begin with, it is essential to clarify that the original function will be partitioned into $t$ distinct segments, and the number of qubits in the $j$-th part is $n_j$, satisfying $\sum_{j=0}^{t-1} n_j =n$, where $j \in \{0,1,\cdots,t-1 \}$ and $2 \leq t \leq n$.

For the zeroth computing node ($j=0$), we choose last $\sum_{i=1}^{t-1} n_i$ bits in $x\in\{0,1\}^{n}$ to divide $f$, then we can get $2^{\sum_{i=1}^{t-1} n_i}$ subfunctions $f_{j,k}: \{0,1\}^{n_j} \rightarrow \{0,1\}$:
\begin{eqnarray}\label{fjk}
f_{j,k}(m_k)=f( \underbrace{m_k}_{n_j} \underbrace{y_{k_{0}} y_{k_{1}}\cdots y_{k_{\left( \sum_{i=1}^{t-1} n_i \right)-1}}}_{\sum_{i=1}^{t-1} n_i}),
\end{eqnarray}
where $m_k\in\{0,1\}^{n_j}$ and $y_{k_{0}} y_{k_{1}}\cdots y_{k_{\left( \sum_{i=1}^{t-1} n_i \right)-1}} \in\{0,1\}^{\sum_{i=1}^{t-1} n_i}$ is the binary representation of 
\begin{eqnarray}
k\in \left \{0,1,\cdots, 2^{\sum_{i=1}^{t-1} n_i}-1 \right\}.
\end{eqnarray} 
Subsequently, we construct a novel function $g_j: \{0,1\}^{n_j} \rightarrow \{0,1\}$ by utilizing these subfunctions,
\begin{eqnarray}\label{gj}
g_{j}(m_j)=\text{OR} \left(f_{j,0}(m_j), f_{j,1}(m_j), \cdots, f_{j, 2^{\sum_{i=1}^{t-1} n_i}-1}(m_j) \right),
\end{eqnarray}
where $m_j\in\{0,1\}^{n_j}$. In addition,
\begin{eqnarray}\label{orfunction}
\text{OR} (x) =
\begin{cases}
1,\vert x\vert \ge 1,\\
0,\vert x\vert =0,
\end{cases}
\end{eqnarray}
where $\vert x\vert $ is the Hamming weight of $x\in\{0,1\}^{n}$ (its number of 1's). 

For the $j$-th computing node ($j\in\{1,2,\cdots, t-2\}$), we choose first $\sum_{i=0}^{j-1} n_i$ and last $\sum_{i=j+1}^{t-1} n_i$ bits in $x\in\{0,1\}^{n}$ to divide $f$, then we can get $2^{\sum_{i=0}^{j-1} n_i+\sum_{i=j+1}^{t-1} n_i}$ subfunctions $f_{j,k}: \{0,1\}^{n_{j}} \rightarrow \{0,1\}$:
\begin{eqnarray}\label{fjk2}
f_{j,k}(m_k)=f( 
\underbrace{x_{k_{0}} x_{k_{1}}\cdots x_{k_{\left( \sum_{i=0}^{j-1} n_i \right)-1}}}_{\sum_{i=0}^{j-1} n_i} 
\underbrace{m_k}_{n_{j}} 
\underbrace{y_{k_{0}} y_{k_{1}}\cdots y_{k_{\left( \sum_{i=j+1}^{t-1} n_i \right)-1}}}_{\sum_{i=j+1}^{t-1} n_i}),
\end{eqnarray}
where $m_k\in\{0,1\}^{n_{j}}$ and $x_{k_{0}} x_{k_{1}}\cdots x_{k_{\left( \sum_{i=0}^{j-1} n_i \right)-1}}y_{k_{0}} y_{k_{1}}\cdots y_{k_{\left( \sum_{i=j+1}^{t-1} n_i \right)-1}} \in\{0,1\}^{\sum_{i=0}^{j-1} n_i+\sum_{i=j+1}^{t-1} n_i}$ is the binary representation of 
\begin{eqnarray}
k\in \left \{0,1,\cdots, 2^{\sum_{i=0}^{j-1} n_i+\sum_{i=j+1}^{t-1} n_i}-1 \right\}.
\end{eqnarray} 
Subsequently, we construct a novel function $g_j: \{0,1\}^{n_j} \rightarrow \{0,1\}$ by utilizing these subfunctions,
\begin{eqnarray}\label{gj2}
g_{j}(m_j)=\text{OR} \left(f_{j,0}(m_j), f_{j,1}(m_j), \cdots, f_{j, 2^{\sum_{i=0}^{j-1} n_i+\sum_{i=j+1}^{t-1} n_i}-1}(m_j) \right),
\end{eqnarray}
where $m_j\in\{0,1\}^{n_j}$. 

For the last computing node ($j=t-1$), we choose first $\sum_{i=0}^{t-2} n_i$ bits in $x\in\{0,1\}^{n}$ to divide $f$, then we can get $2^{\sum_{i=0}^{t-2} n_i}$ subfunctions $f_{j,k}: \{0,1\}^{n_j} \rightarrow \{0,1\}$:
\begin{eqnarray}\label{fjk3}
f_{j,k}(m_k)=f( 
\underbrace{y_{k_{0}} y_{k_{1}}\cdots y_{k_{\left( \sum_{i=0}^{t-2} n_i \right)-1}}}_{\sum_{i=0}^{t-2} n_i}
 \underbrace{m_k}_{n_j}),
\end{eqnarray}
where $m_k\in\{0,1\}^{n_j}$ and $y_{k_{0}} y_{k_{1}}\cdots y_{k_{\left( \sum_{i=0}^{t-2} n_i \right)-1}} \in\{0,1\}^{\sum_{i=0}^{t-2} n_i}$ is the binary representation of 
\begin{eqnarray}
k\in \left \{0,1,\cdots, 2^{\sum_{i=0}^{t-2} n_i}-1 \right\}.
\end{eqnarray} 
Subsequently, we construct a novel function $g_j: \{0,1\}^{n_j} \rightarrow \{0,1\}$ by utilizing these subfunctions,
\begin{eqnarray}\label{gj3}
g_{j}(m_j)=\text{OR} \left(f_{j,0}(m_j), f_{j,1}(m_j), \cdots, f_{j, 2^{\sum_{i=0}^{t-2} n_i}-1}(m_j) \right),
\end{eqnarray}
where $m_j\in\{0,1\}^{n_j}$. 

To summarize, for each node, we can derive the corresponding subfunctions $g_{j}(m_j)$, resulting in a total of $t$ subfunctions, where $j\in\{0,1,\cdots, t-1\}$ and $2 \leq t \leq n$. 

\subsection{Distributed Exact Generalized Grover's Algorithm with $2\leq t \leq n$ computing nodes}\label{degga}
In our study, we posit two fundamental assumptions. 

Firstly, we postulate that it is effortless to ascertain the number of targets to each subfunction $g_j (x)$, where $j\in\{0,1,\cdots, t-1\}$ and $2 \leq t \leq n$. However, this assumption does not equate to the process of determining the number of targets for the subfunctions as represented by Eq. \eqref{fjk}, Eq. \eqref{fjk2} and Eq. \eqref{fjk3}. This discrepancy arises from the fact that this assumption does not inherently reveal the underlying target strings.

Specifically, if we can acquire the set of target strings $A = \{x\in\{0,1\}^n | f(x)=1\}$ for an $n$-qubit generalized search problem, then the set of target strings for each subfunction $g_j (x)$ can be easily obtained by the partition strategy of DEGGA in subsection \ref{subfunctions}, and thus the number of target strings in that set can be determined.

Disregarding the aforementioned assumption, the recognized method for ascertaining the number of target strings for a Boolean function using quantum algorithms is the quantum counting algorithm \cite{RefBrassard2002}. Despite its widespread recognition, this algorithm is probabilistic rather than deterministic, and as such, we will not delve into this matter in great depth. The quest to accurately determine the number of objectives of a Boolean function via a deterministic quantum algorithm remains an unresolved inquiry.  

Let the number of objective terms of the $j$-th subfunction $g_j (x)$ be $a_j$, where
\begin{eqnarray}\label{ajitems}
a_j = \left|  A_{n_j} \right|
\end{eqnarray}
and
\begin{eqnarray}\label{Ajitems}
A_{n_j}=\{x\in\{0,1\}^{n_j}|g_j (x)=1\}.
\end{eqnarray}
Secondly, we assume that it is facile to procure the Oracle for each $g_j (x)$, which means we will have $t$ Oracles
\begin{eqnarray}\label{rtauj}
R_{g_j (x)}:\vert x\rangle\rightarrow e^{i\phi_{n_j} \cdot g_j (x)}\vert x\rangle,
\end{eqnarray}
where $x\in\{0,1\}^{n_j}$, $j\in\{0,1,\cdots, t-1\}$, $\phi_{n_j}=2\arcsin\left(\sin\left(\frac{\pi}{4J_{n_j}+6}\right) / \sin \left(\theta_{n_j}\right)\right)$, $J_{n_j}=\lfloor(\pi/2-\theta_{n_j})/(2\theta_{n_j})\rfloor$, $\theta_{n_j}= \arcsin \left({\sqrt{a_j /2^{n_j}}}\right)$.

Next, we offer a comprehensive exposition of the DEGGA with $2\leq t \leq n$ computing nodes in Algorithm \ref{algodegga}. The entire quantum circuit is depicted in Figure \ref{deggacir}.

\begin{breakablealgorithm}
\caption{Distributed Exact Generalized Grover's Algorithm with $2\leq t \leq n$ computing nodes}
\label{algodegga}
\begin{algorithmic}
\noindent \textbf{Input}: (1) The number of qubits $n$; (2) $N=2^n$; (3) The number of computing nodes $2\leq t \leq n$. (4) The number of qubits of the $j$-th computing node $n_{j}$, where $\sum_{j=0}^{t-1} n_j =n$ and $j \in \{0,1,\cdots,t-1 \}$; (5) A function $f(x)$, where $f(x)=0$ for $x\in\{0,1\}^n$ except $x\in A$, where $A = \{x\in\{0,1\}^n | f(x)=1\}$; (6) The number of target strings of $f(x)$ is $a=\left|  A \right|\ge 1$, which is a smaller number; (7) $t$ subfunctions $g_j (x)$ as in Eq.~\eqref{gj}, Eq.~\eqref{gj2} and Eq.~\eqref{gj3} generated according to $f(x)$ and $n$, where $j\in\{0,1,\cdots,t-1\}$; (8) The number of objective terms of the subfunction $g_j (x)$, $a_j$ as in Eq.~\eqref{ajitems} and Eq.~\eqref{Ajitems}; (9) The Oracle $R_{g_j (x)}$ corresponding to each subfunction $g_j (x)$, as in Eq.~\eqref{rtauj}.

\noindent\textbf{Output}: The target string $x\in A$ with a probability of $100 \%$.

\noindent \textbf{Procedure:}

\noindent \textbf{Step 1.} For $j$ from $0$ to $t-1$, initialize $n_j$ qubits $\vert 0\rangle^{\otimes n_j}$ for the $j$-th computing node. 

\noindent \textbf{Step 2.} For $j$ from $0$ to $t-1$, apply $n_j$ Hadamard gates $H^{\otimes n_j}$ on the $j$-th computing node. 

\noindent \textbf{Step 3.} For $j$ from $0$ to $t-1$, execute $L_{n_j}$ operator $J_{n_j}+1$ times on the $j$-th computing node, 
where
\begin{eqnarray}
L_{n_j}&=& -H^{\otimes n_j} R_{n_j} H^{\otimes n_j}R_{g_j (x)},\\
R_{n_j} &=&  I^{\otimes n_j} + \left(e^{i\phi_{n_j}}-1\right)(\vert 0\rangle\langle0\vert )^{\otimes n_j},\\
R_{g_j (x)} &=& I^{\otimes n_j} + \left(e^{i\phi_{n_j}}-1\right)\vert A_{n_j} \rangle\langle A_{n_j} \vert,\\
|A_{n_j}\rangle&=&\frac{1}{\sqrt{a_j}}\sum\limits_{x\in A_{n_j}}|x\rangle,\\
A_{n_j}&=&\{x\in\{0 , 1\}^{n_j} | g_j (x)=1\},\\
\phi_{n_j}&=&2\arcsin\left( \frac{\sin\left(\frac{\pi}{4J_{n_j}+6}\right)}{ \sin \left(\theta_{n_j}\right)} \right),\\
J_{n_j}&=&\left\lfloor \frac{\pi/2-\theta_{n_j}}{2\theta_{n_j}}\right \rfloor,\\
\theta_{n_j}&=&\arcsin \left({\sqrt{ \frac{a_j} {2^{n_j}}}}\right).
\end{eqnarray}

In fact, Step 2 and Step 3 can be viewed as a modified Grover's algorithm with $n_j$ qubits executed on the $j$-th computing node, denoted by
\begin{eqnarray}\label{ULoperator}
U_L &=& \left(\bigotimes_{j=0}^{t-1} L_{n_j}^{ J_{n_j}+1}\right) \left(\bigotimes_{j=0}^{t-1} H^{\otimes n_j}\right) \\ &=&  \bigotimes_{j=0}^{t-1} \left( L_{n_j}^{J_{n_j}+1}H^{\otimes n_j} \right) .
\end{eqnarray}

\noindent \textbf{Step 4.} Execute $R_{f'}$ operator, where
\begin{eqnarray}
R_{f'} = I^{\otimes n} + \left(e^{i\phi_{f'}}-1\right)\vert A \rangle\langle A \vert,
\end{eqnarray}
and
\begin{eqnarray}
\phi_{f'}&=&2\arcsin\left( \frac{\sin\left(\frac{\pi}{4J_{f'}+6}\right)}{ \sin \left(\theta_{f'}\right)} \right),\\ 
J_{f'}&=&\left \lfloor \frac{\pi/2-\theta_{f'}}{2\theta_{f'}} \right\rfloor, \\
 \theta_{f'}&=& \arcsin \left( \sqrt{\frac{a}{\prod_{j=0}^{t-1} a_j }} \right).
\end{eqnarray}

\noindent \textbf{Step 5.} For $j$ from $0$ to $t-1$, execute $L_{n_j}^{\dagger}$ operator $J_{n_j}+1$ times on the $j$-th computing node, where
\begin{eqnarray}
L_{n_j}^{\dagger}&=& -R_{g_j (x)}^{\dagger}H^{\otimes n_j} R_{n_j}^{\dagger} H^{\otimes n_j},\\
R_{n_j}^{\dagger} &=&  I^{\otimes n_j} + \left(e^{-i\phi_{n_j}}-1\right)(\vert 0\rangle\langle0\vert )^{\otimes n_j},\\
R_{g_j (x)} ^{\dagger}&=& I^{\otimes n_j} + \left(e^{-i\phi_{n_j}}-1\right)\vert A_{n_j} \rangle\langle A_{n_j} \vert.
\end{eqnarray}

\noindent \textbf{Step 6.} Repeat Step 2.

In fact, Step 5 and Step 6 can be regarded as conjugate transpose operations of $U_L$, represented as
\begin{eqnarray}
U^{\dagger}_L &=& \left[\left(\bigotimes_{j=0}^{t-1} L_{n_j}^{J_{n_j}+1}\right) \left(\bigotimes_{j=0}^{t-1} H^{\otimes n_j}\right) \right]^{\dagger}\\
&=& \left(\bigotimes_{j=0}^{t-1} H^{\otimes n_j}\right) \left(\bigotimes_{j=0}^{t-1} \left(L^{\dagger}_{n_j}\right)^{J_{n_j}+1}\right)\\
&=&  \bigotimes_{j=0}^{t-1} \left[ H^{\otimes n_j} \left(L^{\dagger}_{n_j}\right)^{J_{n_j}+1}\right] .
\end{eqnarray}

\noindent \textbf{Step 7.} Execute $R_{0'}$ operator, where
\begin{eqnarray}
R_{0'} =  I^{\otimes n} + \left(e^{i\phi_{f'}}-1\right)(\vert 0\rangle\langle0\vert )^{\otimes n}.
\end{eqnarray}

\noindent \textbf{Step 8.} Repeat Step 2 to Step 3. In other words, execute $U_L$ as in Eq.~\eqref{ULoperator} again.

\noindent \textbf{Step 9.} Repeat Step 4 to Step 8 $J_{f'}$ times.

\noindent \textbf{Step 10.} Measure each qubit in the basis $\{\vert 0\rangle, \vert 1\rangle\}$.
\end{algorithmic}
\end{breakablealgorithm}

\begin{figure}[t]
    \centering
\setlength{\abovecaptionskip}{-0.6cm}
\includegraphics[width=1\textwidth]{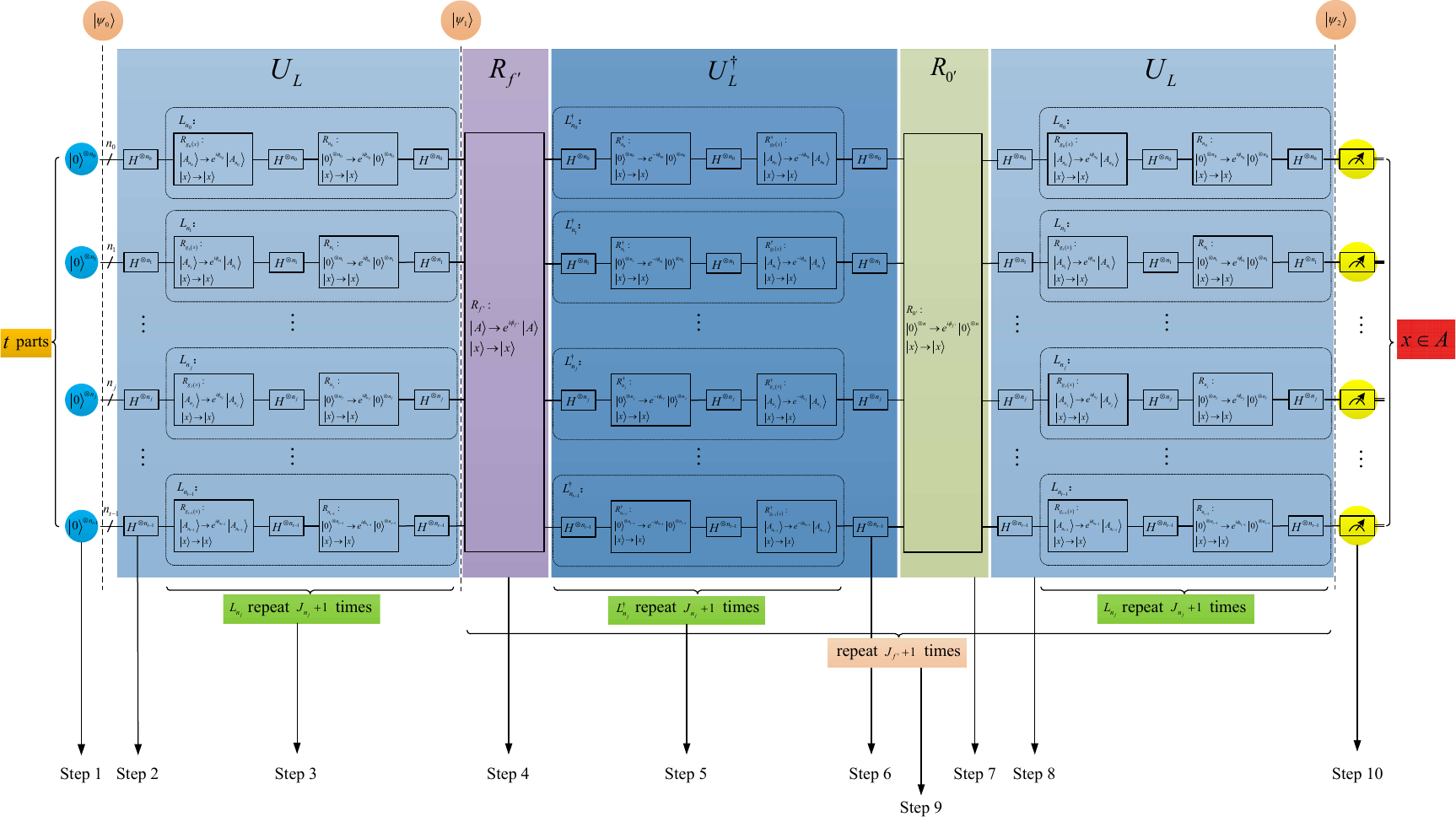}
\caption{Quantum circuit of the distributed exact generalized Grover's algorithm with $2\leq t \leq n$ computing nodes.}
\label{deggacir}
\end{figure}

It is noteworthy that in some cases, $R_{f'}$ can choose to execute quantum gates between computational nodes instead of $n$-qubit gates, which can improve the landability of DEGGA. Eventually, the target strings $x\in A$ can be successfully searched exactly by following the aforementioned algorithmic steps. Consequently, the entire process of the DEGGA with $2\leq t \leq n$ computing nodes is fulfilled.


\section{Analysis}\label{analysis}
In this section, we will demonstrate the correctness of the DEGGA. Afterwards, to illustrate the advantageous effects of distributed quantum algorithms, the circuit depth of our suggested approach will be rigorously offered. In conclusion, we will provide a comparison between our scheme with the modified Grover's algorithm and the existing distributed Grover's algorithms.

\subsection{Correctness}\label{correctness}
To validate the correctness of DEGGA, it is imperative to demonstrate that the target string $x\in A$ can be derived through Algorithm \ref{algodegga} with a theoretical success probability of $100\%$, signifying its precise attainment. The correctness of the DEGGA is substantiated by Theorem  \ref{proofcorrectness} presented subsequently.

\begin{theorem}\label{proofcorrectness}
(\textbf{Correctness of DEGGA}) For an $n$-qubit generalized search problem, it is associated with the following Boolean function:
\begin{eqnarray}
f(x)=
\begin{cases}
1,x\in A,\\
0,x\notin A,
\end{cases}
\end{eqnarray}
where $A = \{x\in\{0,1\}^n | f(x)=1\}$. The target string $x\in A$ can be exactly obtained by applying Algorithm \ref{algodegga}.
\end{theorem}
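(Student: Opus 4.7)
The strategy is to interpret Algorithm \ref{algodegga} as a two-level nesting of Long's exact modified Grover algorithm (Algorithm \ref{Long's Algorithm}) and reduce correctness to two applications of the analysis already given in Section \ref{sectionlong}. The outer level treats each inner run as a black box that prepares a highly structured initial state, after which the overall procedure becomes Long's algorithm with a non-trivial starting vector.

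First (inner level), I would analyze Steps 1--3 node by node. On the $j$-th node, initializing $|0\rangle^{\otimes n_j}$, applying $H^{\otimes n_j}$, and then iterating $L_{n_j}$ exactly $J_{n_j}+1$ times is precisely Long's algorithm applied to the Boolean function $g_j$ with $a_j$ targets and phase $\phi_{n_j}$ chosen according to Eq.~\eqref{phi}. By the correctness of Long's algorithm recalled in Section \ref{sectionlong}, the resulting state on the $j$-th node equals $|A_{n_j}\rangle$ up to a global phase. Taking the tensor product over all $t$ nodes yields
\begin{equation}
|\psi\rangle \,:=\, U_L\,|0\rangle^{\otimes n} \,=\, \bigotimes_{j=0}^{t-1} |A_{n_j}\rangle,
\end{equation}
again up to an overall phase that does not affect the subsequent measurement statistics.

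Second, I would expand $|\psi\rangle$ in the $\{|A\rangle, |A^{\perp}\rangle\}$ basis, where $|A^{\perp}\rangle$ denotes the uniform superposition over $A_{n_0}\times\cdots\times A_{n_{t-1}}\setminus A$. The key combinatorial observation, which follows from the definition of $g_j$ as an OR over the relevant cross-sections of $f$ in Eqs.~\eqref{gj}, \eqref{gj2}, and \eqref{gj3}, is that every global target $x=(m_0,\ldots,m_{t-1})\in A$ satisfies $g_j(m_j)=1$ for every $j$, hence $A\subseteq A_{n_0}\times\cdots\times A_{n_{t-1}}$. Expanding the tensor product then shows that $|\psi\rangle$ is a uniform superposition over $\prod_j a_j$ basis states, of which exactly $a$ lie in $A$, giving
\begin{equation}
|\psi\rangle \,=\, \sin\theta_{f'}\,|A\rangle + \cos\theta_{f'}\,|A^{\perp}\rangle,\qquad \sin^{2}\theta_{f'} \,=\, \frac{a}{\prod_{j=0}^{t-1} a_j},
\end{equation}
which matches the definition of $\theta_{f'}$ used in Step 4.

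Third (outer level), I would identify Steps 4--8 with a single application of $L' := U_L R_{0'} U_L^{\dagger} R_{f'}$. A direct computation gives $U_L R_{0'} U_L^{\dagger} = I^{\otimes n} + (e^{i\phi_{f'}}-1)|\psi\rangle\langle\psi|$, so $L'$ has exactly the form of Long's modified Grover operator on the two-dimensional subspace $\mathrm{span}\{|A\rangle,|A^{\perp}\rangle\}$ with initial vector $|\psi\rangle$, target projector $|A\rangle\langle A|$, and rotation phase $\phi_{f'}$ satisfying the phase-matching condition \eqref{condition}. Counting Steps 4--8 together with the $J_{f'}$ repetitions in Step 9 yields $J_{f'}+1$ total applications of $L'$, and Long's exact analysis then rotates $|\psi\rangle$ precisely onto $|A\rangle$, so the measurement in Step 10 returns some $x\in A$ with probability $1$. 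The step I expect to be the main obstacle is the combinatorial bookkeeping in stage two: one has to verify both the set inclusion $A\subseteq\prod_j A_{n_j}$ and the exact normalization so that the amplitude of $|A\rangle$ in $|\psi\rangle$ is precisely $\sqrt{a/\prod_j a_j}$. Once this identification is in place, the outer iteration reduces mechanically to Long's theorem and correctness follows.
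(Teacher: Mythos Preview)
Your proposal is correct and follows essentially the same two-stage structure as the paper's own proof: first identify Steps 1--3 as running Long's exact algorithm independently on each node to produce $\bigotimes_j |A_{n_j}\rangle$, then recognize Steps 4--9 as Long's exact amplitude amplification with starting state $|\psi\rangle$, target $|A\rangle$, and angle $\theta_{f'}=\arcsin\sqrt{a/\prod_j a_j}$. If anything, you are more explicit than the paper about the combinatorial inclusion $A\subseteq\prod_j A_{n_j}$ and the identity $U_L R_{0'} U_L^{\dagger}=I+(e^{i\phi_{f'}}-1)|\psi\rangle\langle\psi|$, both of which the paper leaves implicit.
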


\begin{proof}
First, let the number of target strings for the Boolean function $f: \{0,1\}^n \rightarrow \{0,1\}$ corresponding to the $n$-qubit  generalized search problem being
\begin{eqnarray}
a = |A| \ge 1,
\end{eqnarray}
where $A = \{x\in\{0,1\}^n | f(x)=1\}$.
We suppose that the $a$ targets are 
\begin{eqnarray}
x^{(k)} = x _{0}^{(k)}x_{1}^{(k)}\cdots x_{n-1}^{(k)}\in A,
\end{eqnarray}
respectively, where $x_{i}^{(k)}\in\{0,1\}$, $i\in\{0,1,\cdots,n-1\}$ and $k\in\{0,1,\cdots,a-1\}$.
Since the DEGGA has a total of $t$ computing nodes, let the number of qubits of the $j$-th computing node being $n_{j}$, where $\sum_{j=0}^{t-1} n_j =n$ and $2\leq t \leq n$.

Second, in accordance with the partition strategy of DEGGA in subsection \ref{subfunctions}, a total of $t$ subfunctions $g_j(x)$ are generated as derived from $f(x)$ and $n$, as delineated in Eq.~\eqref{gj}, Eq.~\eqref{gj2} and Eq.~\eqref{gj3}, where $j\in\{0,1,\cdots,t-1\}$. For the subfunction $g_j(x)$, its target string is in the following set
\begin{eqnarray}
\left\{x _{\sum_{i=0}^{j-1}n_i}^{(k)} x _{\left(\sum_{i=0}^{j-1}n_i\right)+1}^{(k)} x _{\left(\sum_{i=0}^{j-1}n_i\right)+2}^{(k)} \cdots x _{\left(\sum_{i=0}^{j}n_i\right)-1}^{(k)} \right\} \subseteq \{0,1\}^{n_j},
\end{eqnarray}
respectively, where $k\in\{0,1,\cdots,a-1\}$. Let the number of objective terms of the subfunction $g_j (x)$ being $a_j$ as in Eq.~\eqref{ajitems} and Eq.~\eqref{Ajitems}.

Third, as deduced from subsection \ref{sectionlong}, it is evident that the modified Grover's algorithm can precisely ascertain the target strings of the $n$-qubit Boolean function $f(x)$. In other words, once we know each Boolean subfunction and its number of target strings, then we can accurately obtain the quantum superposition state corresponding to the targets of that Boolean subfunction by means of the modified Grover's algorithm on a small scale.

Fourth, from Step 1 to Step 3 of the DEGGA, it is clear that a small scale modified Grover's algorithm is executed, which means that each node can obtain exactly the quantum superposition state of the targets of its corresponding Boolean subfunction $g_j(x)$. It means we will obtain
\begin{eqnarray}
|\psi_1\rangle &=&  U_L |\psi_0\rangle \\
&=& \left[ \bigotimes_{j=0}^{t-1} \left( L_{n_j}^{J_{n_j}+1}H^{\otimes n_j} \right)  \right]  |0\rangle^{\otimes n} \\
&=& \bigotimes_{j=0}^{t-1} \left[ \left( L_{n_j}^{J_{n_j}+1}H^{\otimes n_j} \right)  |0\rangle^{\otimes n_j} \right]   \\
&=& \bigotimes_{j=0}^{t-1} \left (\frac{1}{\sqrt{a_j}} \sum_{k=0}^{a_j-1} \underbrace{ \Ket{ x _{\sum_{i=0}^{j-1}n_i}^{(k)} x _{\left(\sum_{i=0}^{j-1}n_i\right)+1}^{(k)} \cdots x _{\left(\sum_{i=0}^{j}n_i\right)-1}^{(k)} } }_{n_j} \right)
\end{eqnarray}
after Step 3 of the DEGGA.

Finally, Step 4 to Step 9 of the DEGGA can be regarded as analogous to those of the modified Grover's algorithm (or as a process of exact amplitude amplification). Crucially, this necessitates the clarification of the quantity of quantum superposition states and the number of target strings, which allows to obtain the rotation parameter $\phi_{f'}$. At this juncture, the number of quantum superposition states is given by $\prod_{j=0}^{t-1} a_j $ rather than $2^n$, whereas the number of target strings remains consistent with the initial count, denoted by $a$. That is to say, after Step 9 of the DEGGA, we will acquire
\begin{eqnarray}
|\psi_2\rangle&=& \left(U_L R_{0'} U^{\dagger}_L R_{f'}\right) ^{J_{f'}+1} |\psi_1\rangle\\
&=&\frac{1}{\sqrt{a}} \sum_{k=0}^{a-1}  \Ket{x _{0}^{(k)}x_{1}^{(k)}\cdots x_{n-1}^{(k)} } \\
&=&\frac{1}{\sqrt{a}}\sum\limits_{x\in A}|x\rangle = |A\rangle,
\end{eqnarray}
where 
\begin{eqnarray}
J_{f'}&=&\left \lfloor \frac{\pi/2-\theta_{f'}}{2\theta_{f'}} \right\rfloor, \\
 \theta_{f'}&=& \arcsin \left( \sqrt{\frac{a}{\prod_{j=0}^{t-1} a_j }} \right)\\
\phi_{f'}&=&2\arcsin\left( \frac{\sin\left(\frac{\pi}{4J_{f'}+6}\right)}{ \sin \left(\theta_{f'}\right)} \right).
\end{eqnarray}

Thus, by measuring each qubit of $|\psi_2\rangle$ in the basis $\{\vert 0\rangle, \vert 1\rangle\}$ (Step 10 of the DEGGA), we are able to retrieve the target strings corresponding to the original Boolean function with a probability of $100\%$, which is referred to as an exact outcome.
\end{proof}

\subsection{Circuit depth}\label{circuitdepth}
In order to underscore the advantages of distributed quantum algorithms, we initially furnish the definition of circuit depth, followed by a rigorous specification of the circuit depth pertaining to the DEGGA.

\begin{definition} \label{defdepth}
(\textbf{Depth of circuit}) The depth of a circuit is characterized by the longest sequential path from input to output, advancing temporally along the qubit wires.
\end{definition}

For instance, the depth of the circuit on the left is 1, whereas that on the right is 11. (see Figure \ref{depthexample})

\begin{figure}[H]
\centering
\includegraphics[width=4in]{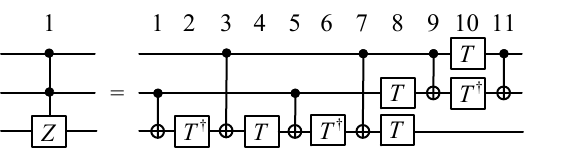}
\caption{\label{depthexample} Equivalent quantum circuit representation of the $C^2 Z$ gate, where $T$ denotes the $\pi / 8$ gate, and $T^{\dagger}$ represents the conjugate transpose of $T$.}
\end{figure}

In \cite{RefZhou2023DEGA}, they have rigorously demonstrated the corresponding circuit depths for both the modified Grover's algorithm and the distributed Grover's algorithm (DEGA) in scenarios involving a single target, as delineated in Theorem \ref{LongDEGAdepth}.

\begin{theorem}
For single target scenarios, the circuit depths for the modified Grover's algorithm and the distributed Grover's algorithm (DEGA) are
\begin{eqnarray}
dep(\text{modified Grover, single target}) = 9 + 8\left\lfloor \frac{\pi}{4}\sqrt{2^n} - \frac{1}{2}\right\rfloor,
\end{eqnarray} 
and
\begin{eqnarray}
dep(\text{DEGA, single target}) = 8(n~\text{mod}~2)+9 =
\begin{cases}
9,          &n~\text{is even},\\
17,       &n~\text{is odd},
\end{cases}
\end{eqnarray}
respectively.
\label{LongDEGAdepth}
\end{theorem}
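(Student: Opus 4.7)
The plan is to fix a canonical decomposition of multi-qubit controlled gates into single- and two-qubit primitives (for example, the decomposition of $C^2 Z$ illustrated in Figure \ref{depthexample}) and then simply count the longest sequential chain along the qubit wires for each circuit, as required by Definition \ref{defdepth}.

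For the modified Grover's algorithm with a single target, I would first compute the depth of one iteration of $L = -H^{\otimes n} R_0 H^{\otimes n} R_{f(x)}$. The two $H^{\otimes n}$ layers each contribute depth $1$, and in the single-target case both $R_0$ and $R_{f(x)}$ act nontrivially on just one basis state, so each reduces (up to surrounding single-qubit rotations) to a multi-controlled phase whose depth is a fixed constant under the chosen decomposition. Adding these constants yields a per-iteration cost of $8$. Prepending the initial Hadamard layer of depth $1$, the total circuit depth becomes $1 + 8(J+1)$. Substituting $\theta = \arcsin(1/\sqrt{2^n})$ into $J = \lfloor (\pi/2 - \theta)/(2\theta)\rfloor = \lfloor \pi/(4\theta) - 1/2\rfloor$ and using the single-target approximation $\theta \approx 1/\sqrt{2^n}$ gives $J = \lfloor \pi/4 \sqrt{2^n} - 1/2 \rfloor$, so one iteration folded into $1 + 8(J+1) = 9 + 8J$ recovers the stated expression.

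For DEGA, the structural observation is that with a single target the $n$ qubits are partitioned into blocks of size at most $2$, each assigned to a separate node, and each node runs an independent constant-size modified Grover's algorithm on its block. Since the sub-circuits act on disjoint qubit wires, the parallel stage has depth equal to the maximum node depth. When $n$ is even every block has exactly $2$ qubits, and a direct expansion of the size-$2$ modified Grover instance (with $\theta = \pi/6$, $J+1 = 2$) yields depth $9$. When $n$ is odd, one residual block has size $1$, and the additional layer needed to reconcile the odd block with the remaining even-sized blocks contributes a fixed overhead of $8$, giving $17$; this matches the compact formula $8(n \bmod 2) + 9$.

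The main obstacle is ensuring that the per-iteration constant ($8$ for modified Grover) and the overhead constants ($9$ and $8$ for DEGA) are tight under the chosen decomposition. Since the theorem is attributed to \cite{RefZhou2023DEGA}, the cleanest route is to import the exact decomposition conventions used there, verify them against the sample decomposition of Figure \ref{depthexample}, and then confirm that the constants in both formulas agree with the counts derived from those conventions. No new analytic ingredient is required beyond careful bookkeeping of gate layers.
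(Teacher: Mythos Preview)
The paper itself does not prove this theorem but defers to \cite{RefZhou2023DEGA}; however, the intended depth-counting convention is visible in the proof of Theorem~\ref{Longmultipledepth}, and your proposal departs from it in ways that create genuine gaps.

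First, your plan to ``fix a canonical decomposition of multi-qubit controlled gates into single- and two-qubit primitives'' is not the convention under which the stated formulas hold. In Theorem~\ref{Longmultipledepth} the paper takes $dep(H^{\otimes n})=1$, $dep(R_0)=3$ and $dep(R_f)=3a$, treating each $\text{C}^{n-1}\text{PS}$ gate as a single layer flanked by two layers of $X$ gates. If you actually expanded the multi-controlled phases as in Figure~\ref{depthexample} (depth $11$ for $C^2Z$ alone), the per-iteration cost would not be $8$. The constant $8=3+1+3+1$ comes from counting $R_f$, $H^{\otimes n}$, $R_0$, $H^{\otimes n}$ as undecomposed blocks; your derivation lands on the right number only because you did not carry out the decomposition you announced.

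Second, your description of DEGA is structurally wrong. As summarized in Section~\ref{comparison}, in \cite{RefZhou2023DEGA} the $n$ qubits are split into $\lfloor n/2\rfloor$ nodes: each of the first $\lfloor n/2\rfloor-1$ nodes receives $2$ qubits and runs the ordinary $2$-qubit Grover's algorithm (one iteration, exact for $N=4$, depth $1+8=9$); when $n$ is odd the last node has \emph{three} qubits, not one, and runs the $3$-qubit modified Grover's algorithm with $J+1=2$ iterations, giving depth $1+2\cdot 8=17$. Your ``residual block of size $1$'' and the reconciliation-overhead explanation are not how the $17$ arises. Your own arithmetic is also inconsistent: a $2$-qubit modified Grover instance with $\theta=\pi/6$ has $J=\lfloor(\pi/2-\pi/6)/(\pi/3)\rfloor=1$, so $J+1=2$ iterations yield depth $1+2\cdot 8=17$, not the $9$ you claim.
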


\begin{proof}
For brevity, the detailed proof is omitted and can be found in  \cite{RefZhou2023DEGA}.
\end{proof}

For scenarios involving multiple targets, we can employ analogous methodologies to determine the circuit depth of the modified Grover's algorithm (see Theorem \ref{Longmultipledepth}).

\begin{theorem}
For multiple targets scenarios, the circuit depth for the modified Grover's algorithm is
\begin{eqnarray}
dep(\text{modified Grover, multiple targets}) = (6+3a) +  (5+3a) \cdot \left\lfloor \frac{\pi}{4}\sqrt{\frac{2^n}{a}} - \frac{1}{2}\right\rfloor,
\end{eqnarray} 
where $a$ represents the number of target strings for the original $n$-qubit Boolean function $f: \{0,1\}^n \rightarrow \{0,1\}$.
\label{Longmultipledepth}
\end{theorem}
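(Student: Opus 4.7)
The plan is to count circuit depth by decomposing the modified Grover algorithm into its constituent layers, mirroring the argument used in Theorem \ref{LongDEGAdepth} for the single-target case and then tracking how the additional targets enter the cost. Recall that the algorithm consists of an initial $H^{\otimes n}$ column followed by $J+1$ iterations of
\begin{eqnarray*}
L = -H^{\otimes n}\, R_{0}\, H^{\otimes n}\, R_{f(x)},
\end{eqnarray*}
where $J = \lfloor (\pi/2-\theta)/(2\theta)\rfloor$ and $\theta = \arcsin(\sqrt{a/2^n})$.

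First, I would observe that a column of $n$ Hadamards applied in parallel contributes depth $1$, so the preparation step $|\psi_1\rangle = H^{\otimes n}|0\rangle^{\otimes n}$ adds a single layer. Then I would analyze the four factors of a single $L$: the two inner $H^{\otimes n}$ sandwiches each cost depth $1$; the phase reflection $R_{0}=I^{\otimes n}+(e^{i\phi}-1)(|0\rangle\langle 0|)^{\otimes n}$ is a single multi-controlled phase rotation around the all-zeros state, which under the decomposition convention of \cite{RefZhou2023DEGA} (the same one used to obtain the constant $8$ in Theorem \ref{LongDEGAdepth}) has depth $3$; and the oracle $R_{f(x)}=I^{\otimes n}+(e^{i\phi}-1)|A\rangle\langle A|$ must mark each of the $a$ target basis states in turn. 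Since these multi-controlled phase gates all act on the same $n$-qubit register, they cannot be parallelized and must be applied sequentially, each contributing depth $3$, for a total depth of $3a$.

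Assembling the four pieces, each iteration of $L$ has depth
\begin{eqnarray*}
1 + 3 + 1 + 3a \;=\; 5 + 3a.
\end{eqnarray*}
Multiplying by the $J+1$ iterations and adding the initial Hadamard column yields
\begin{eqnarray*}
dep(\text{modified Grover, multiple targets})
&=& 1 + (J+1)(5+3a) \\
&=& (6+3a) + (5+3a)\,J.
\end{eqnarray*}
Substituting the small-angle approximation $\theta\approx\sqrt{a/2^n}$ used in Section \ref{sectiongrover} gives $J = \lfloor \tfrac{\pi}{4}\sqrt{2^n/a} - \tfrac{1}{2}\rfloor$, which produces exactly the claimed expression. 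As a consistency check, substituting $a=1$ recovers $9 + 8J$, the single-target formula of Theorem \ref{LongDEGAdepth}.

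The main obstacle I anticipate is justifying that every multi-controlled phase rotation in $L$ (both the one inside $R_0$ and each of the $a$ marking rotations inside $R_{f(x)}$) is being counted at depth $3$ and that the $a$ marking rotations are genuinely sequential. The former is essentially a bookkeeping step that must be tied to the decomposition adopted in \cite{RefZhou2023DEGA} (so that the single-target case reduces correctly); the latter requires pointing out that each target reflection touches all $n$ qubits, so they share the entire register and cannot be executed in parallel. Once these two facts are made explicit, the rest of the proof is a direct additive accounting of layer depths.
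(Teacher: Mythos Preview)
Your proposal is correct and follows essentially the same approach as the paper's proof: both count $dep(H^{\otimes n})=1$, $dep(R_0)=3$, $dep(R_{f(x)})=3a$, then compute $1+(J+1)(5+3a)$ and substitute the small-angle expression for $J$. Your additional justification for why the $a$ target reflections must be sequential and your $a=1$ consistency check are nice touches that the paper omits.
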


\begin{proof}
First, let the number of target strings for the original $n$-qubit Boolean function $f: \{0,1\}^n \rightarrow \{0,1\}$ being
\begin{eqnarray}
a = |A| \ge 1,
\end{eqnarray}
where $A = \{x\in\{0,1\}^n | f(x)=1\}$.

Second, operator $L = -H^{\otimes n} R_0 H^{\otimes n}R_{f(x)}$ is executed with $J+1$ times, where 
\begin{eqnarray}
J&=&\left\lfloor \frac{\pi/2-\theta}{2\theta}\right \rfloor=\left\lfloor \frac{\pi}{4\theta}- \frac{1}{2}\right \rfloor,\\
\theta&=&\arcsin \left({\sqrt{ \frac{a} {2^{n}}}}\right) \sim {\sqrt{ \frac{a} {2^{n}}}} ~ (n \rightarrow + \infty ).
\end{eqnarray}
It means
\begin{eqnarray}
J=\left\lfloor \frac{\pi}{4\theta}- \frac{1}{2}\right \rfloor=\left\lfloor \frac{\pi}{4}\sqrt{\frac{2^n}{a}} - \frac{1}{2}\right\rfloor.
\end{eqnarray}

Third, it is easy to obtain
\begin{eqnarray}
 dep(H^{\otimes n})&=&1,\\
 dep(R_f)&=&3a, \\
 dep(R_0)&=&3.
\end{eqnarray}

Finally, the circuit depth of the modified Grover's algorithm for multiple targets scenarios is
\begin{eqnarray}
dep(\text{modified Grover, multiple targets}) &=& 1 + \left( \left\lfloor \frac{\pi}{4}\sqrt{\frac{2^n}{a}} - \frac{1}{2}\right\rfloor +1 \right) \cdot (3a+1+3+1) \\
                 &=& (6+3a) +  (5+3a) \cdot \left\lfloor \frac{\pi}{4}\sqrt{\frac{2^n}{a}} - \frac{1}{2}\right\rfloor.
\end{eqnarray}
\end{proof}
It can be seen that if $a$ is fixed, the circuit depth of the modified Grover's algorithm still deepens as $n$ increases. Besides, it should be noted that in the calculation of $dep(R_f)$ mentioned above, the case of all zero string being the target state and the process of circuit optimization were ignored.

Next, we present the circuit depth of the DEGGA as articulated by the following theorem.

\begin{theorem}
(\textbf{The circuit depth for DEGGA}) The circuit depth for Algorithm \ref{algodegga} is 
\begin{eqnarray}
&&dep(\text{DEGGA, multiple targets})\nonumber\\
&=& \left(2J_{f'}+3 \right) \cdot \max_{j \in \{0,1,\cdots,t-1 \}} \left\{ dep(\text{modified Grover, $a_j$ targets})\right\} + \left(J_{f'}+1 \right)\cdot (3a+3),
\end{eqnarray}
where
\begin{eqnarray}
J_{f'}&=&\left \lfloor \frac{\pi/2-\theta_{f'}}{2\theta_{f'}} \right\rfloor, \\
 \theta_{f'}&=& \arcsin \left( \sqrt{\frac{a}{\prod_{j=0}^{t-1} a_j }} \right),
\end{eqnarray}
and
\begin{eqnarray}
&&\max_{j \in \{0,1,\cdots,t-1 \}} \left\{ dep(\text{modified Grover, $a_j$ targets})\right\} \nonumber \\
&=&  \max_{j \in \{0,1,\cdots,t-1 \}} \left\{ (6+3a_j) +  (5+3a_j) \cdot \left\lfloor \frac{\pi}{4}\sqrt{\frac{2^{n_j}}{a_j}} - \frac{1}{2}\right\rfloor  \right\}.
\end{eqnarray}
In addition, $a$ represents the number of target strings for the original $n$-qubit Boolean function $f: \{0,1\}^n \rightarrow \{0,1\}$, $a_j$ represents the number of objective terms of the $j$-th subfunction $g_j (x)$ (as in Eq.~\eqref{gj}, Eq.~\eqref{gj2} and Eq.~\eqref{gj3}) generated according to $f(x)$ and $n$, where $j\in\{0,1,\cdots,t-1\}$.
\label{DEGGAdepth}
\end{theorem}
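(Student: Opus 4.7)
The plan is to decompose the DEGGA circuit into its sequential blocks as listed in Algorithm \ref{algodegga}, determine the depth contribution of each block, and then sum them using Definition \ref{defdepth} together with the observation that the depth of a tensor product of circuits acting on disjoint qubits equals the maximum depth among them.

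First, I would identify the two distinct types of blocks in the circuit. The initial block is $U_L$ of Eq.~\eqref{ULoperator}, produced by Steps 1--3; since $U_L=\bigotimes_{j=0}^{t-1}\bigl(L_{n_j}^{J_{n_j}+1}H^{\otimes n_j}\bigr)$ acts on disjoint registers, its depth equals $\max_{j}\,dep\bigl(L_{n_j}^{J_{n_j}+1}H^{\otimes n_j}\bigr)$. Each factor is exactly a small-scale modified Grover's algorithm on $n_j$ qubits with $a_j$ targets, so by Theorem \ref{Longmultipledepth} its depth is $dep(\text{modified Grover, }a_j\text{ targets})$. Hence the initial block contributes $\max_{j\in\{0,\ldots,t-1\}}dep(\text{modified Grover, }a_j\text{ targets})$.

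Next I would analyze the iterated block. Steps 4--8 form the composite operator $U_L R_{0'} U_L^{\dagger} R_{f'}$, and Step 9 repeats Steps 4--8 a further $J_{f'}$ times, so this block is executed $J_{f'}+1$ times in total. Within one execution, $R_{f'}$ acts as a phase-rotation oracle on the $a$ target strings of $f$, so analogously to the computation of $dep(R_f)=3a$ in the proof of Theorem \ref{Longmultipledepth}, I obtain $dep(R_{f'})=3a$; similarly $dep(R_{0'})=3$ as for $R_0$. Both $U_L$ and $U_L^{\dagger}$ have the same depth, namely $\max_{j}dep(\text{modified Grover, }a_j\text{ targets})$, because the conjugate transpose reverses the order but preserves the per-qubit critical path. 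Thus one execution of Steps 4--8 contributes a depth of $3a + \max_j dep(\text{modified Grover, }a_j\text{ targets}) + 3 + \max_j dep(\text{modified Grover, }a_j\text{ targets})$, and multiplying by $J_{f'}+1$ gives
\begin{equation*}
(J_{f'}+1)\bigl(3a+3\bigr) \;+\; (2J_{f'}+2)\,\max_{j\in\{0,\ldots,t-1\}}dep(\text{modified Grover, }a_j\text{ targets}).
\end{equation*}

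Finally, I would add the initial $U_L$ block to the iterated contribution, combine the $\max$ terms as $1 + (2J_{f'}+2) = 2J_{f'}+3$, and substitute $dep(\text{modified Grover, }a_j\text{ targets}) = (6+3a_j)+(5+3a_j)\lfloor \tfrac{\pi}{4}\sqrt{2^{n_j}/a_j}-\tfrac{1}{2}\rfloor$ from Theorem \ref{Longmultipledepth} to recover exactly the claimed formula. The only genuinely delicate step is justifying $dep(R_{f'})=3a$ in the multi-target case; I would handle this precisely as in the proof of Theorem \ref{Longmultipledepth}, namely by writing $R_{f'}$ as the product of $a$ commuting phase-rotation components, each implementable by a multi-controlled phase gate of local depth $3$ acting on the same target qubit, and noting (as remarked after Theorem \ref{Longmultipledepth}) that the all-zero target case and downstream circuit optimizations are ignored in this bookkeeping. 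No further subtleties arise because the global circuit is strictly sequential across the blocks listed above.
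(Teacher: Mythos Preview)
Your proposal is correct and follows essentially the same argument as the paper: both decompose the DEGGA circuit into the initial $U_L$ block plus $J_{f'}+1$ repetitions of $R_{f'},U_L^{\dagger},R_{0'},U_L$, take the maximum over disjoint nodes for the depth of $U_L$, and invoke Theorem~\ref{Longmultipledepth} together with $dep(R_{f'})=3a$ and $dep(R_{0'})=3$. The only cosmetic difference is that the paper groups the first full pass through Steps 2--8 and then adds $J_{f'}$ further repetitions (Step 9), whereas you separate the initial $U_L$ and count Steps 4--8 as executed $J_{f'}+1$ times; the arithmetic is identical.
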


\begin{proof}
First, let the number of target strings for the original $n$-qubit Boolean function $f: \{0,1\}^n \rightarrow \{0,1\}$ being
\begin{eqnarray}
a = |A| \ge 1,
\end{eqnarray}
where $A = \{x\in\{0,1\}^n | f(x)=1\}$. Let the number of objective terms of the $j$-th subfunction $g_j (x)$ (as in Eq.~\eqref{gj}, Eq.~\eqref{gj2} and Eq.~\eqref{gj3} generated according to $f(x)$ and $n$ being
\begin{eqnarray}
a_j = \left|  A_{n_j} \right| ,
\end{eqnarray}
where $\sum_{j=0}^{t-1} n_j =n$, $A_{n_j}=\{x\in\{0,1\}^{n_j}|g_j (x)=1\}$, and $j \in \{0,1,\cdots,t-1 \}$.

Second, the circuit depth for DEGGA clearly satisfies 
\begin{eqnarray}
dep(\text{DEGGA, multiple targets}) =  \sum_{j=2}^{9} dep(\text{Step $j$, DEGGA}).
\end{eqnarray}

Third, given that each computational node will simultaneously execute the quantum circuit of the small-scale modified Grover's algorithm, it suffices to solely consider the computational node with the largest depth when determining the depth of the circuit. From the conclusion of Theorem \ref{Longmultipledepth}, we can easily obtain the circuit depth in from Step 2 to Step 3 for DEGGA,
\begin{eqnarray}
&&dep(\text{Step 2, DEGGA}) + dep(\text{Step 3, DEGGA}) \nonumber\\
&=&\max_{j \in \{0,1,\cdots,t-1 \}} \left\{ dep(\text{modified Grover, $a_j$ targets})\right\} \\
&=& \max_{j \in \{0,1,\cdots,t-1 \}} \left\{ (6+3a_j) +  (5+3a_j) \cdot \left\lfloor \frac{\pi}{4}\sqrt{\frac{2^{n_j}}{a_j}} - \frac{1}{2}\right\rfloor  \right\}.
\end{eqnarray}
Similarly, it can be inferred that
\begin{eqnarray}
&& dep(\text{Step 5, DEGGA}) + dep(\text{Step 6, DEGGA}) \nonumber\\
&=& dep(\text{Step 8, DEGGA})\\
&=&dep(\text{Step 2, DEGGA}) + dep(\text{Step 3, DEGGA}).
\end{eqnarray}
Besides, we can similarly obtain the circuit depths in Step 4 and Step 6 for DEGGA,
\begin{eqnarray}
dep(\text{Step 4, DEGGA}) &=& dep(R_f')=3a, \\
dep(\text{Step 7, DEGGA}) &=& dep(R_0')=3.
\end{eqnarray}

Fourth, since Step 9 (Step 4 to Step 8) needs to be repeated a total of $J_{f'}$ times, where
\begin{eqnarray}
J_{f'}&=&\left \lfloor \frac{\pi/2-\theta_{f'}}{2\theta_{f'}} \right\rfloor, \\
 \theta_{f'}&=& \arcsin \left( \sqrt{\frac{a}{\prod_{j=0}^{t-1} a_j }} \right),
\end{eqnarray}
its corresponds to a circuit depth of
\begin{eqnarray}
dep(\text{Step 9, DEGGA}) &=& J_{f'} \cdot \sum_{j=4}^{8} dep(\text{Step $j$, DEGGA})\\
&=& J_{f'} \cdot  \left(2 \cdot \max_{j \in \{0,1,\cdots,t-1 \}} \left\{ dep(\text{modified Grover, $a_j$ targets})\right\}+ 3a + 3 \right)
\end{eqnarray}

Finally, the circuit depth for Algorithm \ref{algodegga} is 
\begin{eqnarray}
&&dep(\text{DEGGA, multiple targets}) \nonumber\\
&=& \sum_{j=2}^{9} dep(\text{Step $j$, DEGGA})  \\
&=& \sum_{j=2}^{8} dep(\text{Step $j$, DEGGA}) +dep(\text{Step 9, DEGGA})\\
&=& \left( 3\cdot \max_{j \in \{0,1,\cdots,t-1 \}} \left\{ dep(\text{modified Grover, $a_j$ targets})\right\} +3a+3 \right) \nonumber \\
&& + J_{f'} \cdot  \left(2 \cdot \max_{j \in \{0,1,\cdots,t-1 \}} \left\{ dep(\text{modified Grover, $a_j$ targets})\right\}+ 3a + 3 \right)\\
&=& \left(2J_{f'}+3 \right) \cdot \max_{j \in \{0,1,\cdots,t-1 \}} \left\{ dep(\text{modified Grover, $a_j$ targets})\right\} + \left(J_{f'}+1 \right)\cdot (3a+3)\\
&=& \left(2J_{f'}+3 \right) \cdot  \max_{j \in \{0,1,\cdots,t-1 \}} \left\{ (6+3a_j) +  (5+3a_j) \cdot \left\lfloor \frac{\pi}{4}\sqrt{\frac{2^{n_j}}{a_j}} - \frac{1}{2}\right\rfloor  \right\}+ \left(J_{f'}+1 \right)\cdot (3a+3).
\end{eqnarray}
\end{proof}

So far, we have successfully determined the circuit depth of the DEGGA. It can be seen that with a fixed $a$, the choice of dividing strategy is crucial, as it directly determines $t$ (the number of nodes), $n_j$ (the number of qubits on the $j$-th node), and $a_j$ (the number of target strings on the $j$-th node), which are the key factors affecting the circuit depth of the DEGGA.

To put it another way, $n$ is not the most essential factor in determining the circuit depth of DEGGA. Instead, if there are more computing nodes, the qubits of each node are more evenly distributed, the circuit depth advantage of DEGGA can be better reflected.

Last but not least, in comparison to the modified Grover's algorithm, DEGGA substantially diminishes the utilization of $\text{C}^{n-1}\text{PS}$ gates. If we further consider the decomposition of $\text{C}^{n-1}\text{PS}$ gates into single-qubit gates and double-qubit gates, the disparity between the circuit depth requisite for the modified Grover's algorithm and those essential for DEGGA will be exacerbated. We will cover in subsection \ref{sectionDecomposition}. 

\subsection{Comparison}\label{comparison}
Within this subsection, we will present a comparative analysis between DEGGA and the existing distributed Grover's algorithms.

In \cite{RefAvron2021}, Avron et al. decomposed the original $n$-qubit Boolean function $f: \{0,1\}^n \rightarrow \{0,1\}$ into two subfunctions $f_{\text{even}/\text{odd}}: \{0,1\}^{n-1} \rightarrow \{0,1\}$ as in Eq.~\eqref{subfunctions_pra1} and Eq.~\eqref{subfunctions_pra2}. For each subfunction, the Grover's algorithm with $(n-1)$ qubits was executed separately. Subsequently, they will determine whether to assign 0 or 1 to the $i$-th bit, contingent upon which subfunction is capable of yielding the correct outcome (0 for $f_{\text{even}}$ and 1 for $f_{\text{odd}}$). In other words, they have merely obtained the $n-1$ qubits of the target state, as opposed to directly acquiring the target state itself.

In summary, the proposed methodology they advocate is compatible with a 2-node system and possesses the capability to address single-objective search problems (since the problem of how to determine the number of objectives for each subfunction is not addressed). Nonetheless, aside from particular instances, an exact search cannot be realized, and the aggregate number of qubits employed amounts to $2(n-1)$.

In \cite{RefQiu2022}, Qiu et al. further extended the procedure described above. They divided the initial Boolean function $f$ into $2^k$ subfunctions $f_p: \{0,1\}^{n-k} \rightarrow \{0,1\}$ as in Eq.~\eqref{subfunctions_fp}, where $2^k$ represents the number of computing nodes. Having ascertained the number of solutions within each subfunction via the quantum counting algorithm, they subsequently executed the $(n-k)$-qubit Grover's algorithm for each individual subfunction. This process can be conducted either serially or in parallel. Similarly, this approach can only acquire $n-k$ bits of the target string, and subsequently determines the remaining bits in the target string contingent on the specific subfunction involved.

To sum up, their suggested scheme is also equipped to tackle multi-objective search problems and has been further generalized for application in systems with $2^k$ nodes. Likewise, with the exception of special cases where an exact search is not feasible. The aggregate number of qubits necessary for the parallel mode is $2^k(n-k)$. Here we ignore the calculation of the number of qubits required to run the quantum counting algorithm.

In \cite{RefZhou2023DEGA}, Zhou and Qiu et al. adopted a method similar to that presented in this paper for the generation of subfunctions relevant to each node, creating a total of $\lfloor n/2 \rfloor$ subfunctions $g_j (m_j)$ based on the original function $f(x)$ and the value of $n$, where $j \in\{0,1,\cdots, \lfloor n/2 \rfloor-1\}$. For $j \in\{0,1,\cdots,$ $\lfloor n/2 \rfloor-2\}$, they implemented the 2-qubit Grover's algorithm. For $j \in\{ \lfloor n/2 \rfloor-1\}$, they executed the 2-qubit Grover's algorithm when $n$ was even, and employed the 3-qubit modified Grover's algorithm for odd values of $n$. Each component contributes a minor segment of the target, and the entire target string can be constructed by aggregating all the substrings. In essence, each node requires only two or three qubits, thereby facilitating a more straightforward physical implementation.

In short, the strategy they advocate is applicable to the system with $\lfloor n/2 \rfloor$ nodes. Moreover, they are capable of achieving precise searches for single target, yet fall short in multi-target situations. The employment of additional auxiliary qubits is unnecessary. Within their framework, the total qubits remains at $n$.

In this paper, the DEGGA we designed can accurately resolve generalized search problems that comprises multiple targets within an unordered database. Additionally, this method exhibits applicability to systems with any number of $2 \leq t \leq n$ nodes, thereby demonstrating its scalability. Last but not least, the total qubits count in our methodology is $n$, indicating that the employment of auxiliary qubits is superfluous.

To facilitate a more direct comparison between various distributed methods, we present the following table for elucidation (see Table \ref{algorithmscompare}).

\begin{table}[H]
  \centering
  \caption{A simple comparison between DEGGA and the existing distributed Grover's algorithms.}
\scalebox{0.7}{
  \begin{tabular}{lccccc}
    \toprule
   &The algorithm in \cite{RefAvron2021}& The algorithm in \cite{RefQiu2022}& The algorithm in  \cite{RefZhou2023DEGA}& \textbf{DEGGA}  \\
    \midrule
\text{1. The number of computing nodes.} &2&$2^k$, where $1 \leq k \leq n-1$& $\lfloor n/2 \rfloor$ &$\bm{2 \leq t \leq n}$ \\
\text{2. Does it solve the generalized search problem?} &\text{No} &\text{Yes}&\text{No}&\textbf{Yes}\\
\text{3. Does it solve exactly?}
&\text{No}&\text{No}&\text{Yes}&\textbf{Yes} \\
\text{4. The total number of qubits.}&$2(n-1)$&$2^k(n-k)$&$n$&$\bm{n}$\\
\text{5. Does it require auxiliary qubits?}& \text{No}&\text{Yes}&\text{No}&\textbf{No}\\
\text{6. Does it implement through quantum simulation software?}
&\text{Yes}&\text{No}&\text{Yes}&\textbf{Yes} \\
    \bottomrule
  \end{tabular}}
\label{algorithmscompare}
\end{table}

\section{Experiment}\label{experiment}
Within this section, we are going to start with a distinct generalized search problem that encompasses two target strings (000000 and 111111). Subsequently, we will illustrate the execution of 6-qubit modified Grover's algorithm. Moving forward, we shall sequentially demonstrate the detailed implementation process of 6-qubit DEGGA for both 2-node and 3-node configurations. The efficacy and practicality of our proposed method can be further corroborated by running the quantum circuits on the MindSpore Quantum. Through the decomposition of multi-qubit gates in quantum circuits, it becomes evident that distributed quantum algorithms possess exclusive advantages.

\subsection{The 6-qubit modified Grover's algorithm, target string $x \in \{000000,111111 \}$}\label{6qmga}

Let Boolean function $f: \{0,1\}^6 \rightarrow \{0,1\}$. Suppose
\begin{eqnarray}
f(x)=
\begin{cases}
1,x=000000,111111,\\
0, \text{otherwise},
\end{cases}
\end{eqnarray}
where $x\in\{0,1\}^6$ and $x \in \{000000,111111 \}$ is the target string. For $n=6$, it is feasible to construct the quantum circuit of 6-qubit modified Grover's algorithm. (see Figure \ref{6qubit-long-cir}). 

\begin{figure}[H]
\centering
\includegraphics[width=0.95\textwidth]{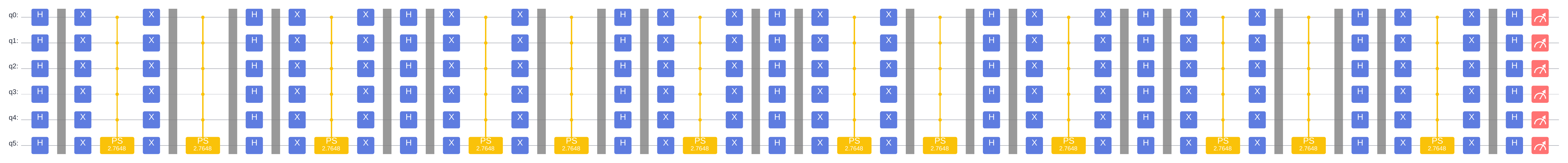}
\caption{Quantum circuit of 6-qubit modified Grover's algorithm (target string $x \in \{000000,111111 \}$). The parameter in the circuit is $\phi=2\arcsin\left(\sin\left(\frac{\pi}{4J+6}\right) / \sin \theta\right) =2.764763603060391\approx 2.7648$, where $J=\lfloor(\pi/2-\theta)/(2\theta)\rfloor$ and $\theta= \arcsin \left({\sqrt{1/2^6}}\right)$.}
\label{6qubit-long-cir}
\end{figure}

Please note that the PhaseShift, referred to as the PS gate, is a single-qubit gate with the following matrix representation:
\begin{eqnarray}
\text{PS($\phi$)} =
\left[
\begin{array}{cc}
1 & 0 \\
0 & e^{i\phi}
\end{array}
\right].
\end{eqnarray}
By sampling 10,000 times of the circuit depicted in Figure \ref{6qubit-long-cir}, the sampling outcomes are presented in Figure \ref{6qubit-long-out}. Throughout this paper, we select the backend simulator named ``mqvector''. To enhance the reproducibility of our experimental findings, we have established a random seed value of 42 for the sampling process.

\begin{figure}[H]
\centering
\includegraphics[width=0.6\textwidth]{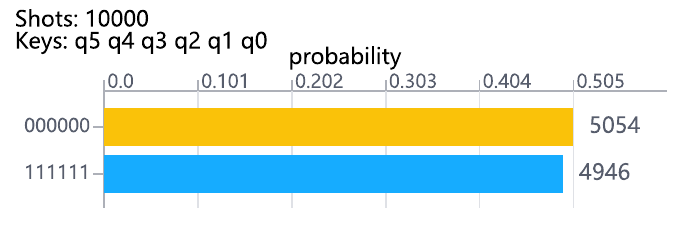}
\caption{The sampling results of 6-qubit modified Grover's algorithm (target string $x \in \{000000,111111 \}$). }
\label{6qubit-long-out}
\end{figure}

The sampling outcomes corroborate that the measurement precisely yields $x$ within the set $\{000000, 111111\}$. Furthermore, the entire quantum circuit consists of a total of 162 quantum gates, including 12 $\text{C}^5\text{PS}$ gates, and has a circuit depth of 37.

\subsection{The 6-qubit DEGGA  with $t = 2$ computing nodes, target string $x \in \{000000,111111 \}$}\label{6qdegga2}
Next, we will present a specific implementation of DEGGA utilizing two computing nodes. In this instance, we configure the computing nodes $t=2$, and assign an equal number of qubits to each computing node, setting $n_0 = n_1 =3$.

On one hand, we choose last $3$ bits in $x$ to divide $f$, then we can get $2^3=8$ subfunctions $f_{{0},j}: \{0,1\}^3 \rightarrow \{0,1\}$:
\begin{eqnarray}
f_{{0},0}(m_0)&=f(  m_0 000), f_{{0},4}(m_0)&=f(  m_0 100),\\
f_{{0},1}(m_0)&=f(  m_0 001), f_{{0},5}(m_0)&=f(  m_0 101),\\
f_{{0},2}(m_0)&=f(  m_0 010), f_{{0},6}(m_0)&=f(  m_0 110),\\
f_{{0},3}(m_0)&=f(  m_0 011), f_{{0},7}(m_0)&=f(  m_0 111),
\end{eqnarray}
where $m_0\in\{0,1\}^{3}$ and $j\in\{0,1, \cdots,7\}$. Obviously, $f_{{0},1}(m_0)=f_{{0},2}(m_0)=f_{{0},3}(m_0)=f_{{0},4}(m_0)=f_{{0},5}(m_0)=f_{{0},6}(m_0)\equiv0$, and
\begin{eqnarray}
f_{{0},0}(m_0)&=&
\begin{cases}
1,m_0=000,\\
0,m_0\neq000,
\end{cases}\\
f_{{0},7}(m_0)&=&
\begin{cases}
1,m_0=111,\\
0,m_0\neq111,
\end{cases}
\end{eqnarray}
where $m_0\in\{0,1\}^3$.

Afterwards, we generate a new function $g_0: \{0,1\}^3 \rightarrow \{0,1\}$ through above eight subfunctions,
\begin{eqnarray}
g_{0}(m_0)=\text{OR} \left(f_{{0},0}(m_0), f_{{0},1}(m_0), \cdots , f_{{0},7}(m_0)\right)=
\begin{cases}
1,m_0=000,111,\\
0, \text{otherwise},
\end{cases}
\end{eqnarray}
where $m_0\in\{0,1\}^{3}$.

On the other hand, we choose first $3$ bits in $x$ to divide $f$, then we can get $2^3=8$ subfunctions $f_{{1},j}: \{0,1\}^3 \rightarrow \{0,1\}$:
\begin{eqnarray}
f_{{1},0}(m_1)&=f( 000 m_1 ), f_{{1},4}(m_1)&=f( 100 m_1 ),\\
f_{{1},1}(m_1)&=f( 001 m_1 ), f_{{1},5}(m_1)&=f( 101 m_1 ),\\
f_{{1},2}(m_1)&=f( 010 m_1 ), f_{{1},6}(m_1)&=f( 110 m_1 ),\\
f_{{1},3}(m_1)&=f( 011 m_1 ), f_{{1},7}(m_1)&=f( 111 m_1 ),
\end{eqnarray}
where $m_1\in\{0,1\}^{3}$ and $j\in\{0,1, \cdots,7\}$. Obviously, $f_{{1},1}(m_1)=f_{{1},2}(m_1)=f_{{1},3}(m_1)=f_{{1},4}(m_1)=f_{{1},5}(m_1)=f_{{1},6}(m_1)\equiv0$, and
\begin{eqnarray}
f_{{1},0}(m_1)&=&
\begin{cases}
1,m_1=000,\\
0,m_1\neq000,
\end{cases}\\
f_{{1},7}(m_1)&=&
\begin{cases}
1,m_1=111,\\
0,m_1\neq111,
\end{cases}
\end{eqnarray}
where $m_1\in\{0,1\}^3$.

Afterwards, we generate another new function $g_1: \{0,1\}^3 \rightarrow \{0,1\}$ through above eight subfunctions,
\begin{eqnarray}
g_{1}(m_1)=\text{OR} \left(f_{{1},0}(m_1), f_{{1},1}(m_1), \cdots , f_{{1},7}(m_1)\right)=
\begin{cases}
1,m_1=000,111,\\
0, \text{otherwise},
\end{cases}
\end{eqnarray}
where $m_1\in\{0,1\}^{3}$.

Given our assumption that it is straightforward to acquire the Oracle for each subfunction, we can construct the complete quantum circuit of 6-qubit DEGGA with $t=2$ computing nodes (see Figure \ref{6qubit-degga2-cir}).

\begin{figure}[H]
\centering
\includegraphics[width=0.95\textwidth]{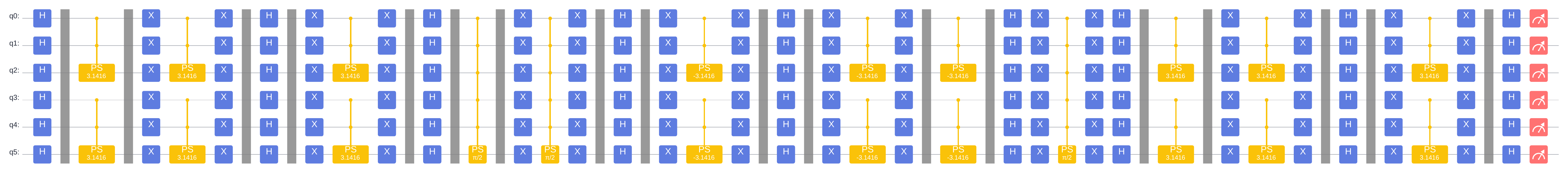}
\caption{Quantum circuit of 6-qubit DEGGA with $t=2$ computing nodes (target string $x \in \{000000,111111 \}$). The parameter in the circuit is $\phi_{n_j}=2\arcsin\left(\sin\left(\frac{\pi}{4J_{n_j}+6}\right) / \sin \left(\theta_{n_j}\right) \right) =3.1415926237874707\approx 3.1416$, where $J_{n_j}=\lfloor(\pi/2-\theta_{n_j})/(2\theta_{n_j})\rfloor$, $\theta_{n_j}= \arcsin \left({\sqrt{1/2^3}}\right)$, and $j\in\{0,1\}$. Similarly, $\phi_{f'}=2\arcsin\left( \sin\left(\frac{\pi}{4J_{f'}+6}\right)/\sin \left(\theta_{f'}\right) \right)=1.5707963267948961\approx \pi/2$, where $J_{f'}=\left \lfloor (\pi/2-\theta_{f'}) / (2\theta_{f'}) \right\rfloor,$ and $ \theta_{f'}= \arcsin \left( \sqrt{2/(2 * 2)} \right)$.}
\label{6qubit-degga2-cir}
\end{figure}

By sampling 10,000 times of the circuit depicted in Figure \ref{6qubit-degga2-cir}, the sampling outcomes are presented in Figure \ref{6qubit-degga2-out}. To ensure the reproducibility of our experimental results, we have fixed the random seed value at 43.

\begin{figure}[H]
\centering
\includegraphics[width=0.6\textwidth]{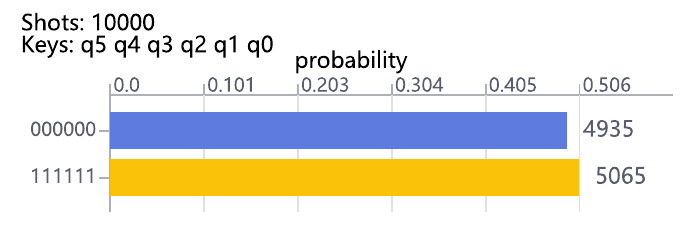}
\caption{The sampling results of 6-qubit DEGGA with $t=2$ computing nodes (target string $x \in \{000000,111111 \}$). }
\label{6qubit-degga2-out}
\end{figure}

The sampling outcomes corroborate that the measurement precisely yields $x$ within the set $\{000000, 111111\}$. Furthermore, the entire quantum circuit consists of a total of 171 quantum gates, including 3 $\text{C}^5\text{PS}$ gates and 18 $\text{C}^2\text{PS}$ gates, and has a circuit depth of 37. Despite the augmentation in quantum gates, there has been a concomitant reduction in the employment of $\text{C}^5\text{PS}$ gates, from 12 to 3.

As previously indicated, $R_ {f '}$ can be chosen to execute quantum gates between computing nodes according to actual situations. Therefore, we can construct an optimized quantum circuit with a 6-qubit DEGA with $t=2$ computational nodes (see Figure \ref{6qubit-degga2opti-cir}).

\begin{figure}[H]
\centering
\includegraphics[width=0.95\textwidth]{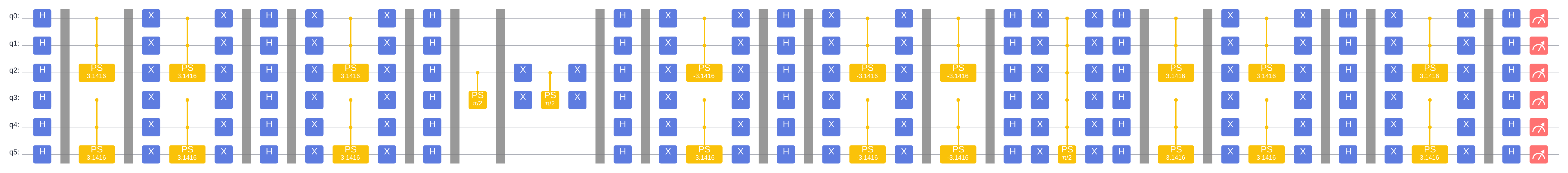}
\caption{Optimized quantum circuit of 6-qubit DEGGA with $t=2$ computing nodes (target string $x \in \{000000,111111 \}$). The parameter in the circuit is $\phi_{n_j} =3.1415926237874707\approx 3.1416$ and $\phi_{f'}=1.5707963267948961\approx \pi/2$, where $j\in\{0,1\}$. }
\label{6qubit-degga2opti-cir}
\end{figure}

By sampling 10,000 times of the circuit depicted in Figure \ref{6qubit-degga2opti-cir}, the sampling outcomes are presented in Figure \ref{6qubit-degga2opti-out}. To ensure the reproducibility of our experimental results, we have fixed the random seed value at 44.

\begin{figure}[H]
\centering
\includegraphics[width=0.6\textwidth]{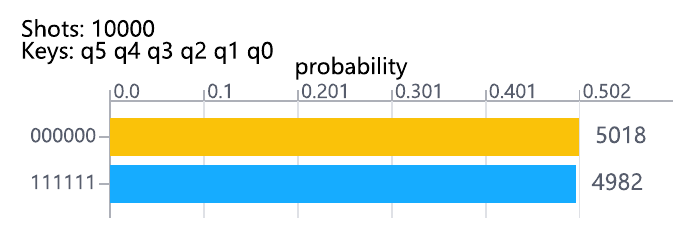}
\caption{The sampling results of optimized 6-qubit DEGGA with $t=2$ computing nodes (target string $x \in \{000000,111111 \}$). }
\label{6qubit-degga2opti-out}
\end{figure}

The sampling outcomes corroborate that the measurement precisely yields $x$ within the set $\{000000, 111111\}$. Furthermore, the entire quantum circuit consists of a total of 163 quantum gates, including 1 $\text{C}^5\text{PS}$ gate, 18 $\text{C}^2\text{PS}$ gates, and 2 $\text{C}\text{PS}$ gates, and has a circuit depth of 37. It is evident that the optimized quantum circuit streamlines the usage of quantum gates, notably curtailing the deployment of $\text{C}^5\text{PS}$ gates from 3 to 1.

\subsection{The 6-qubit DEGGA  with $t =3$ computing nodes, target string $x \in \{000000,111111 \}$}\label{6qdegga3}
At last, we will present a specific implementation of DEGGA utilizing three computing nodes. In this instance, we configure the computing nodes $t=3$, and assign an equal number of qubits to each computing node, setting $n_0 = n_1 =n_2=2$.

Firstly, we choose last $4$ bits in $x$ to divide $f$, then we can get $2^4=16$ subfunctions $f_{{0},j}: \{0,1\}^2 \rightarrow \{0,1\}$:
\begin{eqnarray}
f_{{0},0}(m_0)&=f(  m_0 0000), f_{{0},8}(m_0)&=f(  m_0 1000),\\
f_{{0},1}(m_0)&=f(  m_0 0001), f_{{0},9}(m_0)&=f(  m_0 1001),\\
f_{{0},2}(m_0)&=f(  m_0 0010), f_{{0},10}(m_0)&=f(  m_0 1010),\\
f_{{0},3}(m_0)&=f(  m_0 0011), f_{{0},11}(m_0)&=f(  m_0 1011),\\
f_{{0},4}(m_0)&=f(  m_0 0100), f_{{0},12}(m_0)&=f(  m_0 1100),\\
f_{{0},5}(m_0)&=f(  m_0 0101), f_{{0},13}(m_0)&=f(  m_0 1101),\\
f_{{0},6}(m_0)&=f(  m_0 0110), f_{{0},14}(m_0)&=f(  m_0 1110),\\
f_{{0},7}(m_0)&=f(  m_0 0111), f_{{0},15}(m_0)&=f(  m_0 1111),
\end{eqnarray}
where $m_0\in\{0,1\}^{2}$ and $j\in\{0,1, \cdots,15\}$. Obviously, $f_{{0},1}(m_0)=f_{{0},2}(m_0)=\cdots=f_{{0},14}(m_0)\equiv0$, and
\begin{eqnarray}
f_{{0},0}(m_0)&=&
\begin{cases}
1,m_0=00,\\
0,m_0\neq00,
\end{cases}\\
f_{{0},15}(m_0)&=&
\begin{cases}
1,m_0=11,\\
0,m_0\neq11,
\end{cases}
\end{eqnarray}
where $m_0\in\{0,1\}^2$.

Afterwards, we generate a new function $g_0: \{0,1\}^2 \rightarrow \{0,1\}$ through above sixteen subfunctions,
\begin{eqnarray}
g_{0}(m_0)=\text{OR} \left(f_{{0},0}(m_0), f_{{0},1}(m_0), \cdots , f_{{0},15}(m_0)\right)=
\begin{cases}
1,m_0=00,11,\\
0, \text{otherwise},
\end{cases}
\end{eqnarray}
where $m_0\in\{0,1\}^{2}$.

Secnodly, we choose first $2$ bits and last $2$ bits in $x$ to divide $f$, then we can get $2^4=16$ subfunctions $f_{{1},j}: \{0,1\}^2 \rightarrow \{0,1\}$:
\begin{eqnarray}
f_{{1},0}(m_1)&=f(  00 m_1 00), f_{{1},8}(m_1)&=f( 10 m_1 00),\\
f_{{1},1}(m_1)&=f( 00 m_1 01), f_{{1},9}(m_1)&=f( 10 m_1 01),\\
f_{{1},2}(m_1)&=f( 00 m_1 10), f_{{1},10}(m_1)&=f( 10 m_110),\\
f_{{1},3}(m_1)&=f( 00 m_1 11), f_{{1},11}(m_1)&=f( 10 m_1 11),\\
f_{{1},4}(m_1)&=f( 01 m_1 00), f_{{1},12}(m_1)&=f( 11 m_1 00),\\
f_{{1},5}(m_1)&=f( 01 m_1 01), f_{{1},13}(m_1)&=f( 11 m_1 01),\\
f_{{1},6}(m_1)&=f( 01 m_1 10), f_{{1},14}(m_1)&=f( 11 m_1 10),\\
f_{{1},7}(m_1)&=f( 01 m_1 11), f_{{1},15}(m_1)&=f( 11 m_1 11),
\end{eqnarray}
where $m_1\in\{0,1\}^{2}$ and $j\in\{0,1, \cdots,15\}$. Obviously, $f_{{1},1}(m_1)=f_{{1},2}(m_1)=\cdots=f_{{1},14}(m_1)\equiv0$, and
\begin{eqnarray}
f_{{1},0}(m_1)&=&
\begin{cases}
1,m_1=00,\\
0,m_1\neq00,
\end{cases}\\
f_{{1},15}(m_1)&=&
\begin{cases}
1,m_1=11,\\
0,m_1\neq11,
\end{cases}
\end{eqnarray}
where $m_1\in\{0,1\}^2$.

Afterwards, we generate a new function $g_1: \{0,1\}^2 \rightarrow \{0,1\}$ through above sixteen subfunctions,
\begin{eqnarray}
g_{1}(m_1)=\text{OR} \left(f_{{1},0}(m_1), f_{{1},1}(m_1), \cdots , f_{{1},15}(m_1)\right)=
\begin{cases}
1,m_1=00,11,\\
0, \text{otherwise},
\end{cases}
\end{eqnarray}
where $m_1\in\{0,1\}^{2}$.

Finally, we choose first $4$ bits in $x$ to divide $f$, then we can get $2^4=16$ subfunctions $f_{{2},j}: \{0,1\}^2 \rightarrow \{0,1\}$:
\begin{eqnarray}
f_{{2},0}(m_2)&=f( 0000 m_2), f_{{2},8}(m_2)&=f( 1000 m_2),\\
f_{{2},1}(m_2)&=f( 0001 m_2), f_{{2},9}(m_2)&=f( 1001 m_2),\\
f_{{2},2}(m_2)&=f( 0010 m_2), f_{{2},10}(m_2)&=f( 1010 m_2),\\
f_{{2},3}(m_2)&=f( 0011 m_2), f_{{2},11}(m_2)&=f( 1011 m_2),\\
f_{{2},4}(m_2)&=f( 0100 m_2), f_{{2},12}(m_2)&=f( 1100 m_2),\\
f_{{2},5}(m_2)&=f( 0101 m_2), f_{{2},13}(m_2)&=f( 1101 m_2),\\
f_{{2},6}(m_2)&=f( 0110 m_2), f_{{2},14}(m_2)&=f( 1110 m_2),\\
f_{{2},7}(m_2)&=f( 0111 m_2), f_{{2},15}(m_2)&=f( 1111 m_2),
\end{eqnarray}
where $m_2\in\{0,1\}^{2}$ and $j\in\{0,1, \cdots,15\}$. Obviously, $f_{{2},1}(m_2)=f_{{2},2}(m_2)=\cdots=f_{{2},14}(m_2)\equiv0$, and
\begin{eqnarray}
f_{{2},0}(m_2)&=&
\begin{cases}
1,m_2=00,\\
0,m_2\neq00,
\end{cases}\\
f_{{2},15}(m_2)&=&
\begin{cases}
1,m_2=11,\\
0,m_2\neq11,
\end{cases}
\end{eqnarray}
where $m_2\in\{0,1\}^2$.

Afterwards, we generate a new function $g_2: \{0,1\}^2 \rightarrow \{0,1\}$ through above sixteen subfunctions,
\begin{eqnarray}
g_{2}(m_2)=\text{OR} \left(f_{{2},0}(m_2), f_{{2},1}(m_2), \cdots , f_{{2},15}(m_2)\right)=
\begin{cases}
1,m_2=00,11,\\
0, \text{otherwise},
\end{cases}
\end{eqnarray}
where $m_2\in\{0,1\}^{2}$.

Given our assumption that it is straightforward to acquire the Oracle for each subfunction, we can construct the complete quantum circuit of 6-qubit DEGGA with $t=3$ computing nodes (see Figure \ref{6qubit-degga3-cir}).

\begin{figure}[H]
\centering
\includegraphics[width=0.95\textwidth]{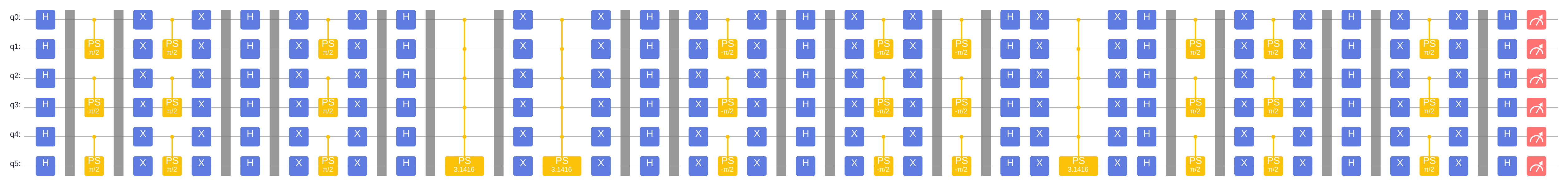}
\caption{Quantum circuit of 6-qubit DEGGA with $t=3$ computing nodes (target string $x \in \{000000,111111 \}$). The parameter in the circuit is $\phi_{n_j}=2\arcsin\left(\sin\left(\frac{\pi}{4J_{n_j}+6}\right) / \sin \left(\theta_{n_j}\right) \right)=1.5707963267948961\approx \pi/2 $, where $J_{n_j}=\lfloor(\pi/2-\theta_{n_j})/(2\theta_{n_j})\rfloor$, $\theta_{n_j}= \arcsin \left({\sqrt{1/2^2}}\right)$, and $j\in\{0,1,2\}$. Similarly, $\phi_{f'}=2\arcsin\left( \sin\left(\frac{\pi}{4J_{f'}+6}\right)/\sin \left(\theta_{f'}\right) \right) =3.1415926237874707\approx 3.1416$, where $J_{f'}=\left \lfloor (\pi/2-\theta_{f'}) / (2\theta_{f'}) \right\rfloor,$ and $ \theta_{f'}= \arcsin \left( \sqrt{2/(2 * 2 * 2)} \right)$.}
\label{6qubit-degga3-cir}
\end{figure}

By sampling 10,000 times of the circuit depicted in Figure \ref{6qubit-degga3-cir}, the sampling outcomes are presented in Figure \ref{6qubit-degga3-out}. To ensure the reproducibility of our experimental results, we have fixed the random seed value at 45.

\begin{figure}[H]
\centering
\includegraphics[width=0.6\textwidth]{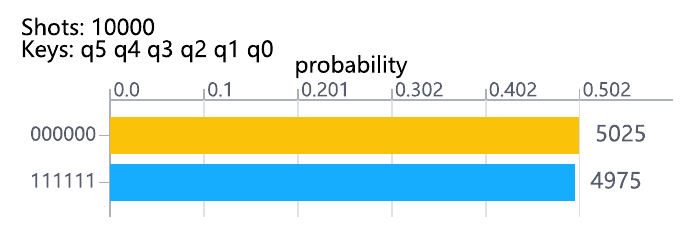}
\caption{The sampling results of 6-qubit DEGGA with $t=3$ computing nodes (target string $x \in \{000000,111111 \}$). }
\label{6qubit-degga3-out}
\end{figure}

The sampling outcomes corroborate that the measurement precisely yields $x$ within the set $\{000000, 111111\}$. Furthermore, the entire quantum circuit consists of a total of 180 quantum gates, including 3 $\text{C}^5\text{PS}$ gates and 27 $\text{C}\text{PS}$ gates, and has a circuit depth of 37. Despite the augmentation in quantum gates, there has been a concomitant reduction in the employment of $\text{C}^2\text{PS}$ gates, from 18 to 0.

Futhermore, we can construct an optimized quantum circuit with a 6-qubit DEGA with $t=3$ computational nodes (see Figure \ref{6qubit-degga3opti-cir}).

By sampling 10,000 times of the circuit depicted in Figure \ref{6qubit-degga3opti-cir}, the sampling outcomes are presented in Figure \ref{6qubit-degga3opti-out}. To ensure the reproducibility of our experimental results, we have fixed the random seed value at 46.

\begin{figure}[H]
\centering
\includegraphics[width=0.95\textwidth]{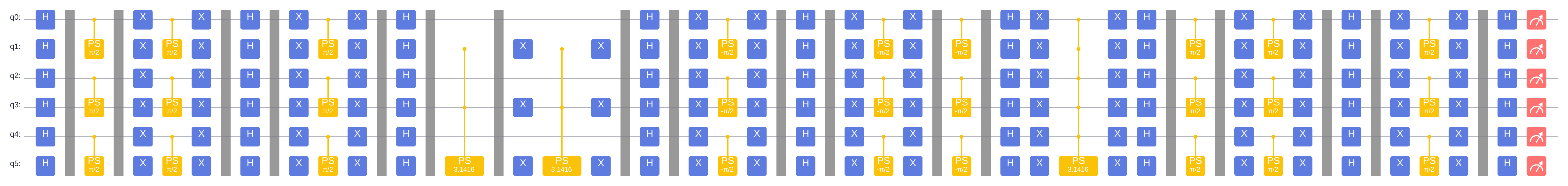}
\caption{Optimized quantum circuit of 6-qubit DEGGA with $t=3$ computing nodes (target string $x \in \{000000,111111 \}$). The parameter in the circuit is $\phi_{n_j} =1.5707963267948961\approx \pi/2$ and $\phi_{f'}=3.1415926237874707\approx 3.1416$, where $j\in\{0,1,2\}$. }
\label{6qubit-degga3opti-cir}
\end{figure}

\begin{figure}[H]
\centering
\includegraphics[width=0.6\textwidth]{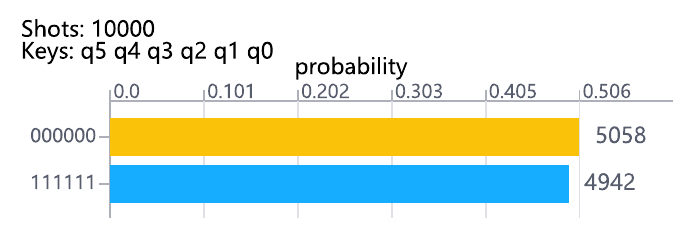}
\caption{The sampling results of optimized 6-qubit DEGGA with $t=3$ computing nodes (target string $x \in \{000000,111111 \}$). }
\label{6qubit-degga3opti-out}
\end{figure}

The sampling outcomes corroborate that the measurement precisely yields $x$ within the set $\{000000, 111111\}$. Furthermore, the entire quantum circuit consists of a total of 174 quantum gates, including 1 $\text{C}^5\text{PS}$ gate, 2 $\text{C}^2\text{PS}$ gates, and 27 $\text{C}\text{PS}$ gates, and has a circuit depth of 37. It is evident that the optimized quantum circuit streamlines the usage of quantum gates, notably curtailing the deployment of $\text{C}^2\text{PS}$ gates from 18 to 2.

Through the aforementioned experiments, we can ascertain the specific procedures involved in implementing a 6-qubit DEGGA. To summarize the outcomes of all the experiments, we present a succinct statistical table as depicted in Table \ref{simplestatistic}.

\begin{table}[H]
  \centering
  \caption{A simple statistic for the modified Grover's algorithm and DEGGA.}
\scalebox{0.7}{
  \begin{tabular}{lcccccc}
    \toprule
   &Modified Grover's algorithm & DEGGA-2 & Optimized DEGGA-2 & DEGGA-3 & Optimized DEGGA-3  \\
    \midrule
\text{1. The number of computing nodes.} &1&2&2&3&3 \\
\text{2. Quantum circuit.} &\text{Figure} \ref{6qubit-long-cir}&\text{Figure} \ref{6qubit-degga2-cir}&\text{Figure} \ref{6qubit-degga2opti-cir}&\text{Figure} \ref{6qubit-degga3-cir}&\text{Figure} \ref{6qubit-degga3opti-cir}\\
\text{3. Random seed value.}
&42&43&44&45&46 \\
\text{4. The sampling results.}&\text{Figure} \ref{6qubit-long-out}&\text{Figure} \ref{6qubit-degga2-out}&\text{Figure} \ref{6qubit-degga2opti-out}&\text{Figure} \ref{6qubit-degga3-out}&\text{Figure} \ref{6qubit-degga3opti-out}\\
\text{5. Target strings.}& $ \{000000,111111 \}$&$ \{000000,111111 \}$&$ \{000000,111111 \}$&$ \{000000,111111 \}$&$ \{000000,111111 \}$\\
\textbf{6. Does it achieve an exact search?}
&\textbf{Yes}&\textbf{Yes}&\textbf{Yes}&\textbf{Yes}&\textbf{Yes} \\
    \bottomrule
  \end{tabular}}
\label{simplestatistic}
\end{table}

It is obviously known from the table that both the modified Grover's algorithm and the DEGGA (two-node and three-node) are capable of achieving an exact search. Additionally, we summarize the quantum gates and circuit depths necessitated by the modified Grover's algorithm and the DEGGA for searching 000000 and 111111, as depicted in Figure \ref{comparetable}.

\begin{figure}[H]
\centering
\includegraphics[width=0.65\textwidth]{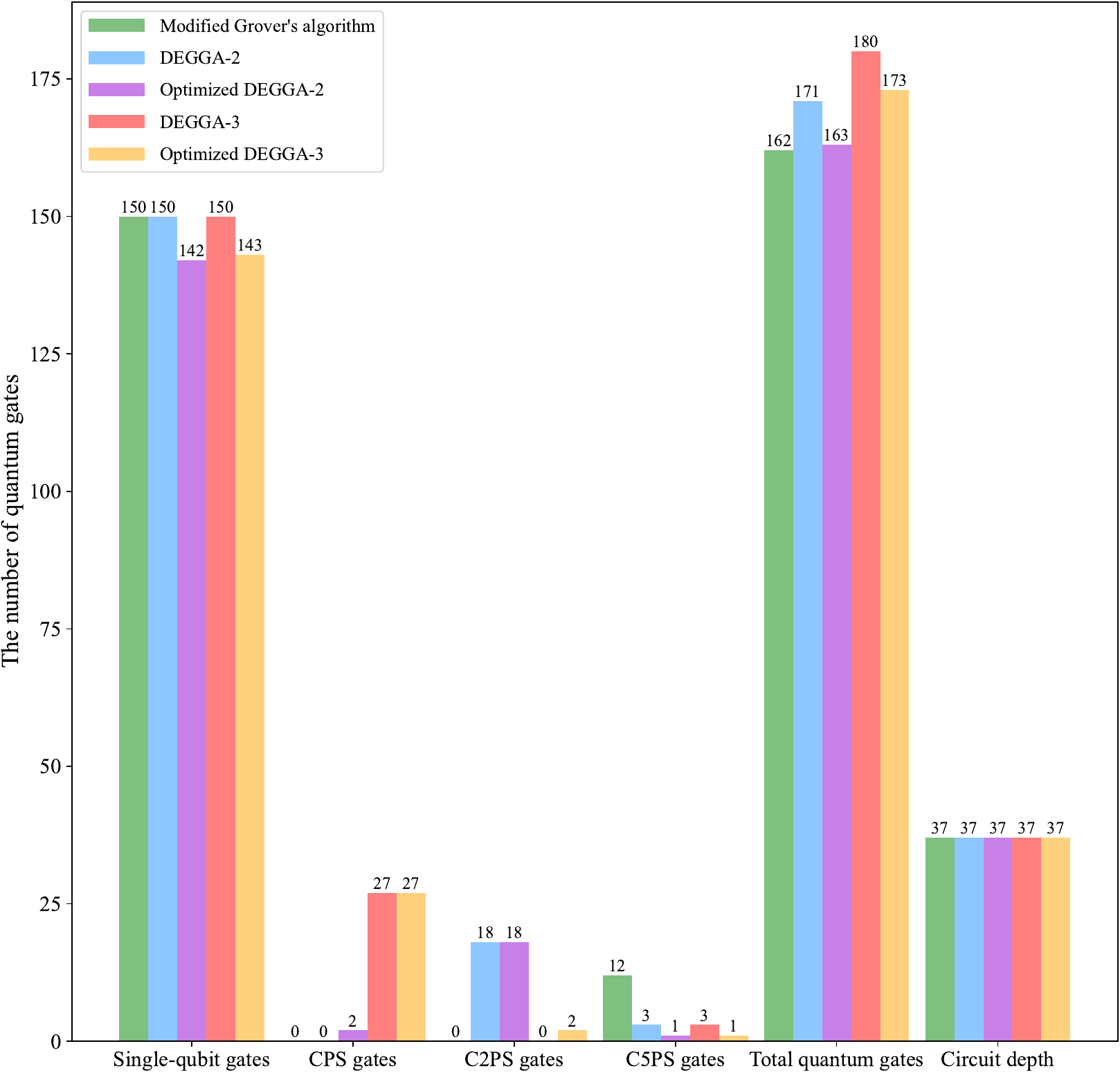}
\caption{A straightforward comparison between the modified Grover's algorithm and the DEGGA.}
\label{comparetable}
\end{figure}

From the statistical comparison presented above, we can ascertain that: (1) the circuit depth of each algorithm is 37; (2) while the modified Grover's algorithm necessitates fewer quantum gates, it requires a greater number of $\text{C}^{5}\text{PS}$ gates; (3) the optimized DEGGA demands fewer quantum gates compared to its pre-optimized counterpart, particularly diminishing the reliance on $\text{C}^{5}\text{PS}$ gates; (4) as the number of nodes escalates (from 2 to 3), there is an additional reduction in the necessary $\text{C}^{2}\text{PS}$ gates. In other words, (optimized) DEGGA-3 requires fewer multi-qubit gates compared to (optimized) DEGGA-2.

\subsection{Decomposition of $\text{C}^{5}\text{PS}$ gates}\label{sectionDecomposition}
In this subsection, we further consider the decomposition of $\text{C}^{5}\text{PS}$ gates into a combination of single-qubit and double-qubit gates. By employing such a decomposition, the disparity between the modified Grover's algorithm and DEGGA in terms of the quantum gate count and circuit depth becomes more pronounced, thereby further accentuating the unequivocal benefits of distributed quantum algorithms.

\begin{lemma}
For any single-qubit gate $U$, $\text{C}^{5}\text{U}$ gate can be decomposed into the following quantum circuit  (see Figure \ref{C5Ugate}),
\begin{figure}[H]
\centering
\includegraphics[width=0.33\textwidth]{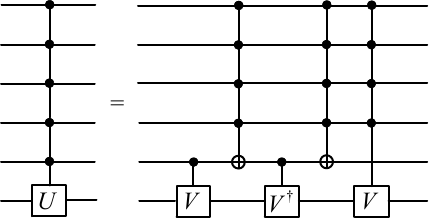}
\caption{Quantum circuit for the decomposition of $\text{C}^{5}\text{U}$ gate.}
\label{C5Ugate}
\end{figure}
where $V$ represents another single-qubit gate that fulfills the condition $V^2=U$.
\label{lemmaC5U}
\end{lemma}

\begin{proof}
For brevity, the detailed proof is omitted and can be found in  \cite{RefBarenco1995}.
\end{proof}

Specifically, if $U = \text{PS}(\theta)$, then $V = \text{PS}(\theta/2)$ and $V^{\dagger} = \text{PS}(-\theta/2)$ can be readily derived. The decomposed quantum circuit contains two $\text{C}^{4}\text{X}$ gates, two CPS gates and one $\text{C}^{4}\text{PS}$ gate. The decomposition of the $\text{C}^{4}\text{X}$ gate will be addressed subsequently.

For the $\text{C}^{4}\text{PS}$ gate, by employing a decomposition analogous to Lemma \ref{lemmaC5U}, we derive the ensuing quantum circuit (as depicted in Figure \ref{C4Ugate}), wherein $W$ denotes another single-qubit gate that satisfies the equation $W^2 = V$.

\begin{figure}[H]
\centering
\includegraphics[width=0.35\textwidth]{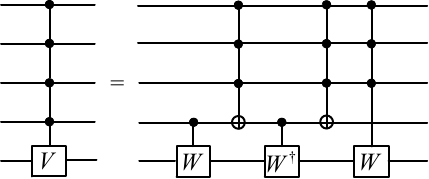}
\caption{Quantum circuit for the decomposition of $\text{C}^{4}\text{V}$ gate.}
\label{C4Ugate}
\end{figure}
In other words, if $V = \text{PS}(\theta/2)$, then $W = \text{PS}(\theta/4)$ and $W^{\dagger} = \text{PS}(-\theta/4)$ can be readily derived. The decomposed quantum circuit contains two $\text{C}^{3}\text{X}$ gates, two CPS gates and one $\text{C}^{3}\text{PS}$ gate. The decomposition of the $\text{C}^{3}\text{X}$ gate will be addressed subsequently. 

For the $\text{C}^{3}\text{PS}$ gate, it can be decomposed into the following quantum circuit (see Lemma \ref{lemmaC3U}). 

\begin{lemma}
For any single-qubit gate $W$, $\text{C}^{3}\text{W}$ gate can be decomposed into the following quantum circuit  (see Figure \ref{C3Ugate}),
\begin{figure}[H]
\centering
\includegraphics[width=0.5\textwidth]{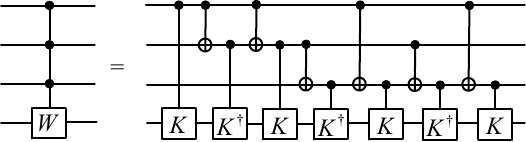}
\caption{Quantum circuit for the decomposition of $\text{C}^{3}\text{W}$ gate.}
\label{C3Ugate}
\end{figure}
where $K$ represents another single-qubit gate that fulfills the condition $K^4=W$.
\label{lemmaC3U}
\end{lemma}

\begin{proof}
For brevity, the detailed proof is omitted and can be found in  \cite{RefBarenco1995}.
\end{proof}

Specifically, if $W = \text{PS}(\theta/4)$, then $K = \text{PS}(\theta/16)$ and $K^{\dagger} = \text{PS}(-\theta/16)$ can be readily derived. The resulting decomposed quantum circuit exclusively comprises two-qubit gates, namely CNOT and CPS gates.

Thus far, with respect to the decomposition of the $\text{C}^{5}\text{PS}$ gate, it is necessary to consider the decompositions of both the $\text{C}^{4}\text{X}$ and $\text{C}^{3}\text{X}$ gates.

\begin{lemma}
For the $\text{C}^{4}\text{X}$ gate, it can be decomposed into the quantum circuit detailed below  (see Figure \ref{C4Xgate}).
\begin{figure}[H]
\centering
\includegraphics[width=0.55\textwidth]{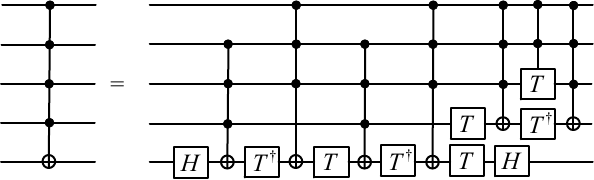}
\caption{Quantum circuit for the decomposition of $\text{C}^{4}\text{X}$ gate.}
\label{C4Xgate}
\end{figure}
\label{lemmaC4X}
\end{lemma}

\begin{proof}
It can be readily verified that the unitary matrices corresponding to the quantum circuits on either side of the equality sign are equivalent.
\end{proof}

For the $\text{C}^{2}\text{T}$ gate, it is fundamentally identical to the $\text{C}^{2}\text{PS}(\pi/4)$ gate, implying that it can be decomposed using the quantum circuit detailed below (as referenced in Lemma \ref{lemmaC2U}).

\begin{lemma}
For any single-qubit gate $U$, $\text{C}^{2}\text{U}$ gate can be decomposed into the following quantum circuit  (see Figure \ref{C2Ugate})
\begin{figure}[H]
\centering
\includegraphics[width=0.4\textwidth]{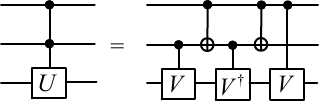}
\caption{Quantum circuit for the decomposition of $\text{C}^{2}\text{U}$ gate.}
\label{C2Ugate}
\end{figure}
where $V$ represents another single-qubit gate that fulfills the condition $V^2=U$.
\label{lemmaC2U}
\end{lemma}

\begin{proof}
For brevity, the detailed proof is omitted and can be found in  \cite{RefBarenco1995}.
\end{proof}

Specifically, if $U =\text{PS}(\pi/4)$, then $V = \text{PS}(\pi/8)$ and $V^{\dagger} = \text{PS}(-\pi/8)$ can be readily derived. The resulting decomposed quantum circuit exclusively comprises two-qubit gates, namely CNOT and CPS gates. 

In order to decompose $\text{C}^{3}\text{X}$, we give the following Lemma \ref{lemmaC3X}.
\begin{lemma}
For the $\text{C}^{3}\text{X}$ gate, it can be decomposed into the quantum circuit detailed below  (see Figure \ref{C3Xgate}).
\begin{figure}[H]
\centering
\includegraphics[width=0.6\textwidth]{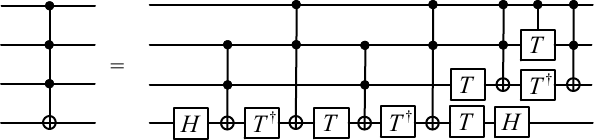}
\caption{Quantum circuit for the decomposition of $\text{C}^{3}\text{X}$ gate.}
\label{C3Xgate}
\end{figure}
\label{lemmaC3X}
\end{lemma}

\begin{proof}
It can be readily verified that the unitary matrices corresponding to the quantum circuits on either side of the equality sign are equivalent. Specifically, the $\text{C}\text{T}$ gate is equivalent to the $\text{C}\text{PS}(\pi/4)$ gate. 
\end{proof}

Lastly, we present the decomposition for the $\text{C}^{2}\text{X}$ gate (Toffoli gate).

\begin{lemma}
For the $\text{C}^{2}\text{X}$ gate, it can be decomposed into the quantum circuit detailed below (see Figure \ref{C2Xgate}).
\begin{figure}[H]
\centering
\includegraphics[width=0.65\textwidth]{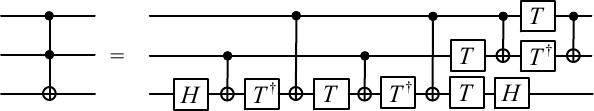}
\caption{Quantum circuit for the decomposition of $\text{C}^{2}\text{X}$ gate (Toffoli gate).}
\label{C2Xgate}
\end{figure}
\label{lemmaC2X}
\end{lemma}

\begin{proof}
For brevity, the detailed proof is omitted and can be found in  \cite{RefNielsen2002}.
\end{proof}

In summary, Table \ref{simplestatisticquantumgates} presents a compilation of the necessary quantum gates count for the $\text{C}^{5}\text{PS}$ and other gates along with the circuit depths of the decomposed quantum circuits. It can be seen that a total of 1429 quantum gates are required to decompose a $\text{C}^{5}\text{PS}$ gate, with 884 single-qubit gates and 545 double-qubit gates (514 for CNOT gates 14 for CT gates and 17 for CPS gates), for a total circuit depth of 959. 

Note that the statistics do not take into account the circuit optimization process. Furtuermore, there might exist more optimal decomposition methods. However, an exhaustive discussion of these methods is not within the purview of the current paper.

\begin{table}[H]
  \centering
  \caption{A statistic for the decomposition of $\text{C}^{5}\text{PS}$ and other gates (ignore the process of circuit optimisation).}
\scalebox{0.7}{
  \begin{tabular}{lcccccccc}
    \toprule
   &single-qubit gates & double-qubit gates & CNOT gates &CT gate& CPS gates & total quantum gates & circuit depth  \\
    \midrule
\text{1. $\text{C}^{2}\text{X}$ gate}
&9&6&6&0&0&15&12 \\
\text{2. $\text{C}^{3}\text{X}$ gate}
&62&37&36&1&0&99&69 \\
\text{3. $\text{C}^{4}\text{X}$ gate}
&380&227&218&6&3&607&408 \\
\textbf{4. $\text{C}^{2}\text{PS}$ gate} &\textbf{0}&\textbf{5}&\textbf{2}&\textbf{0}&\textbf{3}&\textbf{5}&\textbf{5} \\
\text{5. $\text{C}^{3}\text{PS}$ gate} &0&13&6&0&7&13&13 \\
\text{6. $\text{C}^{4}\text{PS}$ gate} &124&89&78&2&9&213&149 \\
\textbf{7. $\text{C}^{5}\text{PS}$ gate}
&\textbf{884}&\textbf{545}&\textbf{514}&\textbf{14}&\textbf{17}&\textbf{1429}&\textbf{959} \\
    \bottomrule
  \end{tabular}}
\label{simplestatisticquantumgates}
\end{table}

Last but not least, we substitute the above decomposition into the modified Grover's algorithm and the DEGGA for searching 000000 and 111111, and again summarise their required quantum gates as well as circuit depths. The comparative results after the decomposition are shown in Figure \ref{comparetabledecom}.

\begin{figure}[H]
\centering
\includegraphics[width=0.65\textwidth]{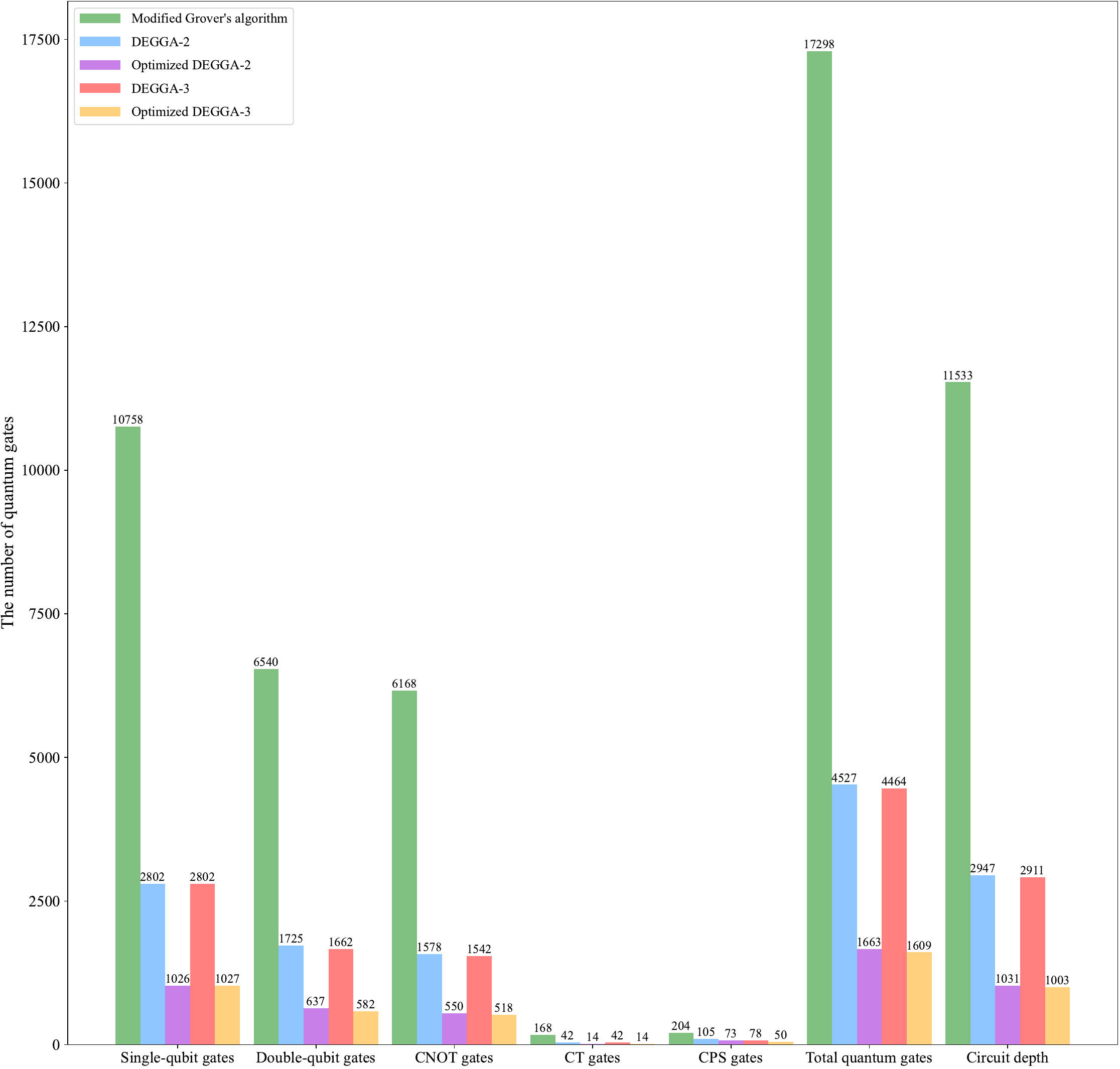}
\caption{A straightforward comparison between the modified Grover's algorithm and the DEGGA after decomposion (ignore the process of circuit optimisation).}
\label{comparetabledecom}
\end{figure}


The comparison elucidated above reveals that DEGGA requires a reduced number of quantum gates and shallower circuit depths in contrast to the modified Grover's algorithm. Specifically, a $90.7\%$ reduction in quantum gates and a $91.3\%$ diminished circuit depth, thereby rendering our proposed distributed quantum algorithm more feasible for application in the NISQ era.

\section{Conclusion}\label{conclusion}
Distributed quantum computation has been identified as a promising and advantageous application in the noisy intermediate-scale quantum (NISQ) era, in which each computational node necessitates fewer qubits and quantum gates. In other words, distributed quantum computing holds the potential to solve problems with greater efficiency.

In this study, we have focused on a generalized search problem involving multiple targets within an unordered database. Subsequently, we have developed a Distributed Exact Generalized Grover's Algorithm (DEGGA) that tackles the initial challenge of generalized search by decomposing it into arbitrary $t$ components, where $2 \leq t \leq n$.

More specifically, (1) our algorithm is accurate, ensuring a theoretical probability of $100\%$ for identifying the target states; (2) if the number of targets is fixed, the pivotal factor influencing the circuit depth of DEGGA is the partitioning strategy, rather than the magnitude of $n$. Conversely, the modified Grover's algorithm augments the circuit depth as $n$ escalates; (3) our approach only requires $n$ qubits, devoid of any auxiliary qubits.

Ultimately, we have elucidated the resolutions (two-node and three-node) of a particular generalized search issue incorporating two goal strings (000000 and 111111) based on the DEGGA. Furthermore, we have outlined the detailed procedures involved in implementing a 6-qubit DEGGA on MindSpore Quantum (a quantum simulation software). Eventually, through the decomposition of multi-qubit gates, DEGGA diminishes the utilization of quantum gates by $90.7\%$ and decreases the circuit depth by $91.3\%$ in comparison to the modified Grover's algorithm by Long. It is increasingly evident that distributed quantum algorithms offer augmented practicality in the NISQ era. Our findings not only demonstrate the practicality and efficiency of our proposed approach but also provide a solid foundation for future research in this domain. Besides, it is still an open question how to definitively ascertain the number of targets of a Boolean function (or subfunction) by a quantum algorithm.

\section*{Declaration of competing interest}
The authors declare that they possess no conflicting financial interests or personal relationships that could potentially bias the research findings presented in this paper.

\section*{Data availability statement}
The manuscript does not contain any accompanying data.

\section*{Acknowledgements}
\noindent This work is supported by the China Postdoctoral Science Foundation under Grant No.2023M740874, the Guangdong Provincial Quantum Science Strategic Initiative under Grant No.GDZX2303003 and No.GDZX2303007, the Fundamental Research Funds for the Central Universities, Sun Yat-sen University under Grant No.2021qntd28, the Fundamental Research Funds for the Central Universities, Sun Yat-sen University under Grant No.2023lgbj020, the SYSU Key Project of Advanced Research, the Shenzhen Science and Technology Program under Grant No. JCYJ20220818102003006, the Shenzhen Science and Technology Program under Grant No.2021Szvup172, and the Innovation Program for Quantum Science and Technology under Grant No.2021ZD0302901. The authors thank Qudoor Technology for the supports from its trapped-ion quantum computing platform.


\balance


\begin{thebibliography}{}
\bibitem{RefBenioff1980}
P. Benioff, The computer as a physical system: a microscopic quantum mechanical Hamiltonian model of computers as represented by Turing machines, \newblock J. Stat. Phys., \textbf{22}(5), 563-591 (1980)

\bibitem{RefFeynman1982}
R. P. Feynman, Simulating physics with computers, Int. J. Theor. Phys., \textbf{21}(6), 467-488 (1982)

\bibitem{RefDeutsch1985}
D. Deutsch, Quantum theory, the Church-Turing principle and the universal quantum computer, \newblock Proc. R. Soc. Lond. A, \textbf{400}(1818), 97-117 (1985)

\bibitem{RefDeutsch1992}
D. Deutsch, R. Jozsa, Rapid solution of problems by quantum computation, \newblock Proc. R. Soc. Lond. A, \textbf{439}(1907), 553-558 (1992)

\bibitem{RefBernstein1993}
E. Bernstein, U. Vazirani, Quantum complexity theory, \newblock in: Proceedings of the Twenty-Fifth Annual ACM Symposium on Theory of Computing (STOC'93), pp 11-20 (1993)

\bibitem{RefSimon1997}
D. R. Simon, On the power of quantum computation, \newblock SIAM J Comput., \textbf{26}, 1411-1473 (1997)

\bibitem{RefShor1994}
P. W. Shor, Algorithms for quantum computation: discrete logarithms and factoring, \newblock in: Proceedings 35th Annual Symposium on Foundations of Computer Science, IEEE, pp 124-134 (1994)

\bibitem{RefGrover1997}
L. K. Grover, Quantum mechanics helps in searching for a needle in a haystack, \newblock Phys. Rev. Lett., \textbf{79}(2), 325 (1997)

\bibitem{RefHarrow2009}
A. W. Harrow, A. Hassidim, S. Lloyd, Quantum algorithm for linear systems of equations, \newblock Phys. Rev. Lett., \textbf{103}(15), 150502 (2009)

\bibitem{RefVQA2021}
M. Cerezo, A. Arrasmith, R. Babbush, et al., Variational quantum algorithms, \newblock Nat. Rev. Phys., \textbf{3}(9): 625-644 (2021)

\bibitem{RefVQE2014}
A. Peruzzo, J. McClean, P. Shadbolt, et al., A variational eigenvalue solver on a photonic quantum processor. \newblock Nat. Commun., \textbf{5}(1): 4213 (2014)

\bibitem{RefQAOA2014}
E. Farhi, J. Goldstone, S. Gutmann, A quantum approximate optimization algorithm. \newblock arXiv: 1411.4028 (2014)

\bibitem{RefChristoph1996}
D. Christoph, H. Peter, A quantum algorithm for finding the minimum, \newblock arXiv: quant-ph/9607014 (1996)

\bibitem{RefRamesh2003}
H. Ramesh, V. Vinay, String matching in $\tilde{O}(\sqrt{n}+\sqrt{m})$ quantum time, J. Discrete Algorithms, \textbf{1}(1): 103-110 (2003)

\bibitem{RefAmbainis2019}
A. Ambainis, K. Balodis, J. Iraids, et al., Quantum speedups for exponential-time dynamic programming algorithms, in: Proceedings of the 13th Annual ACM-SIAM Symposium on Discrete Algorithms. San Diego: SIAM, pp 1783-1793 (2019)

\bibitem{RefAndris2020}
A. Andris, L. Nikita, Quantum algorithms for computational geometry problems, in: 15th Conference on the Theory of Quantum Computation, Communication and Cryptography, \textbf{9}: 1-10 (2020)

\bibitem{RefBrassard2002}
G. Brassard, P. Höyer, M. Mosca, et al., Quantum amplitude amplification and estimation, Contemp. Math., \textbf{305}: 53-74 (2002)

\bibitem{Refiao2010}
Z. J. Diao, Exactness of the original Grover search algorithm, \newblock Phys. Rev. A, \textbf{82}(4), 044301 (2010)

\bibitem{RefHöyer2000}
P. Höyer. Arbitrary phases in quantum amplitude amplification. \newblock Phys. Rev. A, \textbf{62}(5): 052304 (2000)

\bibitem{RefLong2001}
G. L. Long, Grover algorithm with zero theoretical failure rate, \newblock Phys. Rev. A, \textbf{64}(2): 022307 (2001)

\bibitem{RefPreskill2018}
J. Preskill, Quantum computing in the NISQ era and beyond, \newblock Quantum, \textbf{2}, 79 (2018)

\bibitem{RefCirac1995}
J. I. Cirac, P. Zoller, Quantum Computations with Cold Trapped Ions, \newblock Phys. Rev. Lett., \textbf{74}(20), 4091-4094 (1995)

\bibitem{RefLu2007}
C. Y. Lu, X. Q. Zhou, O. Guhne and et al., Experimental entanglement of six photons in graph states, \newblock Nat. Phys., \textbf{3}(2), 91-95 (2007).

\bibitem{RefMakhlin2001}
Y. Makhlin, G. Sch\"{o}n, A. Shnirman, Quantum-state engineering with Josephson-junction devices, \newblock Rev. Mod. Phys., \textbf{73}(2), 357-400 (2001)

\bibitem{RefBerezovsky2008}
J. Berezovsky, M. H.  Mikkelsen, N. G. Stoltz and et al., Picosecond coherent optical manipulation of a single electron spin in a quantum dot, \newblock Science, \textbf{320}(5874), 349-352 (2008)

\bibitem{RefHanson2008}
R. Hanson, D. D. Awschalom, Review Article Coherent manipulation of single spins in semiconductors., \newblock Nature, \textbf{453}, 1043-1049 (2008)

\bibitem{RefBuhrman2003}
H. Buhrman and H. R\"{o}hrig, Distributed Quantum Computing, \newblock in: Mathematical Foundations of Computer Science 2023: 28th International Symposium, Berlin Heidelberg, Springer, pp 1-20 (2003)

\bibitem{RefYimsiriwattana2004}
A. Yimsiriwattana, S. J. Lomonaco, Distributed quantum computing: A distributed Shor algorithm, \newblock Quantum Inf. Comput. II, \textbf{5436}, 360-372 (2004)

\bibitem{RefBeals2013}
R. Beals, S. Brierley, O. Gray and et al., Efficient distributed quantum computing, \newblock Proc. R. Soc. A, \textbf{469}(2153), 20120686 (2013)

\bibitem{RefLi2017}
K. Li, D. W. Qiu, L. Li, S. Zheng, and Z. Rong, Application of distributed semi-quantum computing model in phase estimation, \newblock Inf. Process. Lett., \textbf{120}, 23-29 (2017)

\bibitem{RefAvron2021}
J. Avron, O. Casper, and I. Rozen, Quantum advantage and noise reduction in distributed quantum computing, \newblock Phys. Rev. A, \textbf{104}(5), 052404 (2021)

\bibitem{RefQiu2022}
D. W. Qiu, L. Luo, Distributed Grover's algorithm, \newblock Theor. Comput. Sci., \textbf{993}, 114461 (2024)

\bibitem{RefZhou2023DEGA}
X. Zhou, D. W. Qiu, L. Luo,  Distributed exact Grover's algorithm, Front. Phys., \textbf{18}(5), 51305 (2023)

\bibitem{RefTan2022}
J. W. Tan, L. G. Xiao, D. W. Qiu, L. Luo, and P. Mateus, Distributed quantum algorithm for Simon's problem, \newblock Phys. Rev. A, \textbf{106}(3), 032417 (2022)

\bibitem{RefLi2023Simon}
H. Li, D. W. Qiu, L. Luo, P. Mateus, Exact distributed quantum algorithm for generalized Simon's problem, \newblock Acta Inform., 1-29 (2024)

\bibitem{RefZhou2023DBVA}
X. Zhou, D. W. Qiu, L. Luo, Distributed Bernstein–Vazirani algorithm, \newblock Physica A, \textbf{629}, 129209 (2023)

\bibitem{RefLi2023}
H. Li, D. W. Qiu, L. Luo, Distributed exact quantum algorithms for Deutsch-Jozsa problem, arXiv:2303.10663 (2023)

\bibitem{RefXiao2022}
L. G. Xiao, D. W. Qiu, L. Luo, and et al., Distributed Shor's algorithm, Quantum Inf. Comput. \textbf{23}(1\&2), 27-44 (2023)

\bibitem{RefXiao2023}
L. G. Xiao, D. W. Qiu, L. Luo, and et al., Distributed quantum-classical hybrid Shor's algorithm, arXiv:2304.12100  (2023)

\bibitem{Refmq_2021}
MindQuantum, \url{https://gitee.com/mindspore/mindquantum} (2021)

\bibitem{RefBarenco1995}
A. Barenco, C. H. Bennett, R. Cleve, and et al., Elementary gates for quantum computation, \newblock Phys. Rev. A, \textbf{52}(5), 3457 (1995)

\bibitem{RefNielsen2002}
M. A. Nielsen, H. Buhrman, R. Clever and et al., Quantum computation and quantum information. Cambridge: Cambridge University Press (2002)

\end{thebibliography}
\end{document}